\definecolor{blue}{rgb}{0.1,0.2,0.5}
\definecolor{brown}{rgb}{0.6,0.6,0.2}
\theoremstyle{plain}
\newtheorem{theorem}{Theorem}
\newcommand{\newtheoremwithcrefformat}[2]{%
  \newtheorem{#1}[theorem]{#2}%
  \crefformat{#1}{##2\MakeUppercase#1~##1##3}%
  \Crefformat{#1}{##2\MakeUppercase#1~##1##3}%
}
\newcommand{\newseptheoremwithcrefformat}[2]{%
  \newtheorem{#1}{#2}%
  \crefformat{#1}{##2\MakeUppercase#1~##1##3}%
  \Crefformat{#1}{##2\MakeUppercase#1~##1##3}%
}
\theoremstyle{nonumberplain}
\newtheorem{proof}{Proof}
\def\cqedsymbol{\ifmmode$\lrcorner$\else{\unskip\nobreak\hfil
\penalty50\hskip1em\null\nobreak\hfil$\lrcorner$
\parfillskip=0pt\finalhyphendemerits=0\endgraf}\fi} 
\newcommand{\cqed}{\renewcommand{\qed}{\cqedsymbol}}
\newcommand{\fit}{binding}
\newcommand{\N}{\mathbb{N}}
\newcommand{\R}{\mathbb{R}}
\newcommand{\eps}{\varepsilon}
\renewcommand{\P}{\mathbf{P}}
\newcommand{\A}{\mathbf{A}}
\newcommand{\Str}{\Sigma}
\newcommand{\Tct}{\Theta}
\newcommand{\Tctf}{\Tct^{\A}}
\newcommand{\Tctm}{\Tct^{\P}}
\newcommand{\Pb}{\mathbb{P}}
\newcommand{\Exp}{\mathbb{E}}
\newcommand{\Var}{\mathrm{Var}}
\newcommand{\Cov}{\mathrm{Cov}}
\newcommand{\wt}[2]{W(#1,#2)}
\newcommand{\wtime}[2]{T\langle #1,#2\rangle}
\renewcommand{\leq}{\leqslant}
\renewcommand{\geq}{\geqslant}
\begin{document}
\title{\textbf{On the effect of symmetry requirement for rendezvous on the complete graph}\thanks{This work is a part of project TOTAL (Mi. Pilipczuk) that have received funding from the European Research Council (ERC) under the European Union's Horizon 2020 research and innovation programme (grant agreement No.~677651). It is also partially funded by the French ANR projects ANR-16-CE40-0023 (DESCARTES) and ANR-17-CE40-0015 (DISTANCIA).}}

\author{
Marthe Bonamy\thanks{
 CNRS, LaBRI, Université de Bordeaux, France, \texttt{marthe.bonamy@u-bordeaux.fr}.
}
\and
Micha\l{}~Pilipczuk\thanks{
  Institute of Informatics, University of Warsaw, Poland, \texttt{michal.pilipczuk@mimuw.edu.pl}.
}
\and
Jean-S\'ebastien Sereni\thanks{
  Centre national de la recherche scientifique, CSTB (ICube), Strasbourg, France, \texttt{sereni@kam.mff.cuni.cz}.
}
}

\begin{titlepage}
\def\thepage{}
\thispagestyle{empty}
\maketitle

\begin{textblock}{20}(0, 12.5)
\includegraphics[width=40px]{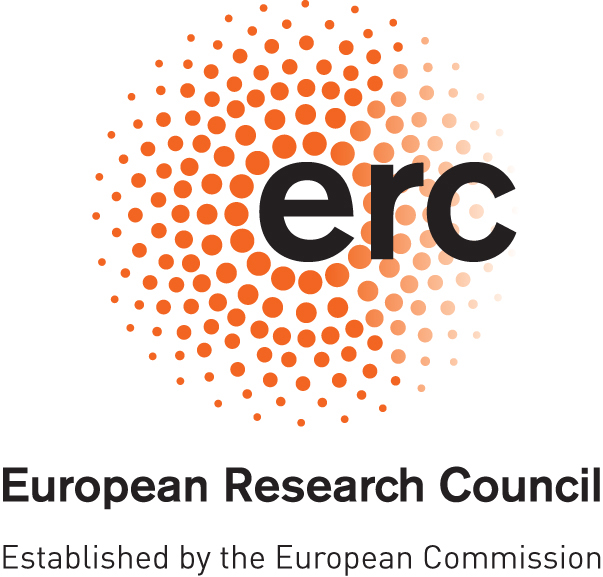}%
\end{textblock}
\begin{textblock}{20}(-0.25, 12.9)
\includegraphics[width=60px]{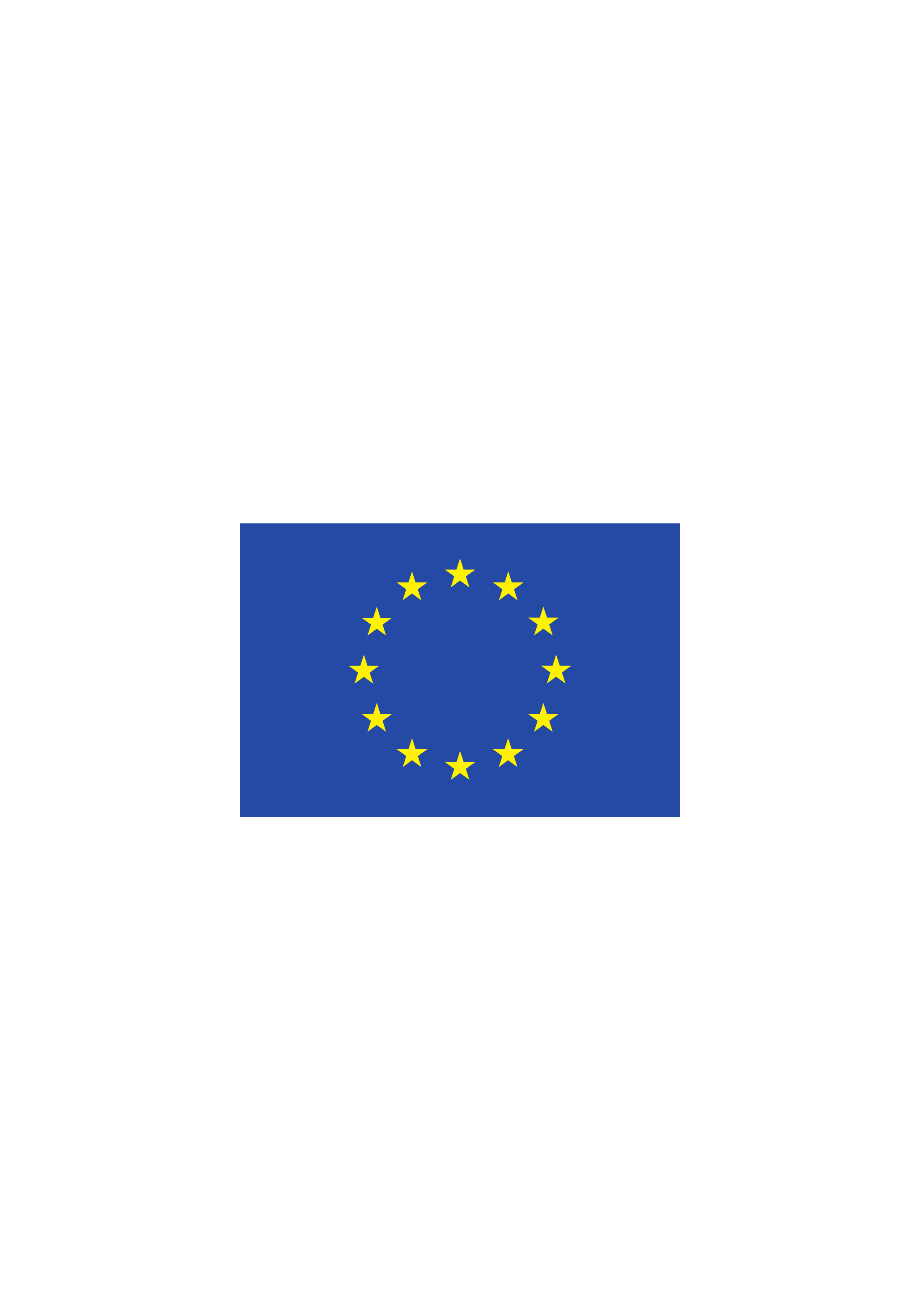}%
\end{textblock}

\begin{abstract}
We consider a classic rendezvous game where two players try to meet each other on a set of $n$ locations.
    In each round, every player visits one of the locations and the game finishes when the players meet
    at the same location. The goal is to devise strategies for both players that minimize the expected
    waiting time till the rendezvous.

In the asymmetric case, when the strategies of the players may differ, it is known that the optimum
    expected waiting time of $\frac{n+1}{2}$ is achieved by the \emph{wait-for-mommy} pair of strategies,
    where one of the players stays at one location for $n$ rounds, while the other player searches
    through all the $n$ locations in a random order. However, if we insist that the players are symmetric
    --- they are expected to follow the same strategy --- then the best known strategy, proposed by
    Anderson and Weber~\cite{AndersonW90}, achieves an asymptotic expected waiting time of $0.829 n$.

We show that the symmetry requirement indeed implies that the expected waiting time needs to be
    asymptotically larger than in the asymmetric case. Precisely, we prove that for every $n\geq 2$, if
    the players need to employ the same strategy, then the expected waiting time is at least
    $\frac{n+1}{2}+\eps n$, where $\eps=2^{-36}$.
\end{abstract}

\end{titlepage}

\section{Introduction}\label{sec:intro}

Rendezvous search questions fall within the long-established field of search games:
instead of having a player searching for an otherwise indifferent treasure, there are
now two players that want to meet as quickly as possible. This very natural problem lends
itself to a number of very different, more or less formalised settings. It was first specified as an optimisation
problem in~1976 by Alpern at the end of a talk (see~\cite{Alpern02}), in two different settings:
the \emph{astronaut problem} and the seemingly simpler \emph{telephone problem}.
In the former problem, two players are on a sphere, each with a given unit walking speed
and no common orientation in space, and they want to minimise their expected meeting time.
The telephone problem has since been rephrased as
a \emph{rendezvous game on discrete locations}, as follows.
Two players wish to meet on a set of $n$ locations and they proceed in
rounds. In each round, every player visits a location of her choice. The game
finishes when both players meet at the same location. The goal of the players
is to minimize the expected waiting time till a meet-up, also called a
\emph{rendezvous}.
This formulation permits to easily impose extra constraints on how the players can move from one location
to another by using different underlying space topologies (which here are graphs: in the original
telephone problem, the underlying graph is the complete graph on~$n$ vertices). The most studied cases
are when the graph is either complete or a path.

Let us point out that, originally, no difference between the two players is assumed here, so that they
must use the same strategy: this is called the \emph{symmetric case}. It implies a level of randomness,
as otherwise the players may well never meet. The case where the players are allowed to use
different strategies, called \emph{asymmetric}, was introduced in~1995~\cite{Alpern95}.

As pointed out earlier, for instance by Alpern~\cite{Alpern02}, such a natural question can be raised
in a number of different contexts, such as that of migrating animals.
There is a rich research literature on
rendezvous games and its many variants, \emph{e.g.} with more players~\cite{AlpernL02},
different rules of the game (for instance seeking to minimise the second meeting time~\cite{Weber12b}) or
other topologies of the search space (including when the players know where they start from, that is,
when they have a common labelling of the graph~\cite{Alpern02b}). We invite an interested reader to the survey of
Alpern~\cite{Alpern02} for a broader and formal introduction.

Coming back to the rendezvous game on the complete graph, the asymmetric case
was solved by Anderson and Weber~\cite{AndersonW90}, using what is coined
the \emph{wait-for-mommy strategy}: one of the players stays for
$n$ rounds in one location, while the other player searches through all the $n$
locations in a random order. Then the expected waiting time is equal to
$\frac{n+1}{2}$, and it is known that this value is optimum~\cite{AndersonW90}: every pair of
strategies for the players yields expected waiting time not lower than
$\frac{n+1}{2}$.

Apart from proving the aforementioned lower bound of $\frac{n+1}{2}$ in the asymmetric case, Anderson and
Weber~\cite{AndersonW90} also studied the symmetric variant of the problem, where the two players are
required to use the same strategy.  While always visiting a random location gives an expected waiting
time of $n$, Anderson and Weber proposed a more clever symmetric strategy that achieves an asymptotic
expected waiting time slightly smaller than $0.829 n$, which we explain next.

The Anderson-Weber strategy works as follows. Let $\theta\in [0,1]$ be a parameter, to be fixed later.
The players divide the game into groups of $n-1$ consecutive rounds. At the beginning of each group of
rounds, each player randomly decides her behavior during these rounds: with probability $\theta$ she will
stay in a random location for all $n-1$ rounds, and with probability $1-\theta$ she will visit $n-1$
locations chosen at random, and in a random order. Thus, intuitively, the Anderson-Weber strategy tries
to break the symmetry by randomly assigning to each player either the role of the baby (who is passive),
or the role of the mommy (who is active). However, there is a significant probability that both players
get the same role, which results in an expected waiting time significantly higher than $\frac{n+1}{2}$.
Indeed, while for different $n$, different values of $\theta$ optimize the expected waiting time, with
$n$ tending to infinity one should pick $\theta$ tending to roughly~$0.24749$, which results in an
asymptotic expected waiting time slightly smaller than $0.829 n$.

The Anderson-Weber strategy has been analyzed for small values of $n$. It is known that picking the right
$\theta$ yields an optimum strategy for $n=2$~\cite{AndersonW90} and for $n=3$~\cite{Weber12}, this
latter result being much more difficult to prove. For $n=4$, as proved by Weber~\cite{Weber09}
there is a slightly better strategy outside of the framework of Anderson and Weber.
However, in general it is conjectured that the Anderson-Weber strategy is asymptotically optimum: there
is no strategy for arbitrary $n$ that would yield an asymptotic expected waiting time smaller than
(roughly)~$0.829 n$. However, to the best of our knowledge, no asymptotic lower bound higher than
$\frac{n+1}{2}$, which holds even for the asymmetric variant, was known prior to this work.

\paragraph*{Our contribution.} We prove that for every $n\geq 2$, in the symmetric rendezvous game on $n$
locations the expected waiting time needs to be significantly larger than $\frac{n+1}{2}$. Precisely, if
the players are requested to follow the same strategy, then whatever strategy they choose, the expected
waiting time will be at least~$\frac{n+1}{2}+\eps n$ for $\eps=2^{-36}$. See Theorem~\ref{thm:main} in
Section~\ref{sec:prelims} for a formal statement. While this still leaves a large gap to the best known
upper bound of $0.829 n$, due to Anderson and Weber~\cite{AndersonW90}, this seems to be the first lower
bound for arbitrary $n$ that significantly distinguishes the symmetric case from the asymmetric case,
where $\frac{n+1}{2}$ is the optimum.

The idea behind our proof can be explained as follows. As in other works, \emph{e.g.}~\cite{Weber12}, we
restrict the game to the first $n$ rounds and prove a lower bound already for this simpler game. We
classify deterministic strategies of the players (which we call \emph{tactics}) into those that rather
stay at few locations and those that seek through many locations. Formally, tactics of the first kind
--- the \emph{passive tactics} --- visit at most $n/2$ different locations, while tactics of the
second kind --- the \emph{active tactics} --- visit more than $n/2$ different~locations.

The intuition drawn from the asymmetric case is that the expected waiting time is minimized when one
player plays a passive tactic, while the other plays an active tactic. As now the players need to follow the
same strategy (understood as a probability distribution over tactics), with probability at least
$\frac{1}{2}$ they choose to use tactics of the same kind (activity level). Then it suffices to prove
that when two tactics of the same kind are played against each other, the expected waiting time is
significantly larger than $\frac{n+1}{2}$.

To this end, we show that if same-kind tactics are employed, the probability that no rendezvous happens
at all is bounded from below by a positive constant. This easily implies a better-than-$\frac{n+1}{2}$
lower bound on the expected waiting time. To analyze the probability of no rendezvous, we investigate a
random variable~$X$ that indicates the total number of rendezvous if the game is not stopped when the
players meet for the first time. Then $X$ has mean (roughly) equal to $1$, so to prove that $X=0$ with
significant probability, we show that $X$ is not well concentrated around its mean. This involves
establishing a lower bound on the variance of $X$, which in turn follows from the assumption that the
employed tactics have the same kind.

\section{The model and the problem}\label{sec:prelims}

In this section we formalize the considered rendezvous search game and state the main result in precise
terms. As in previous works, \emph{e.g.}~\cite{Weber12}, we make the game finite by stopping it after $n$
rounds. Precisely, if the players did not meet after $n$ rounds, we stop the game and set $n+1$ as the
obtained time till rendezvous. Note that this may only decrease the expected waiting time as compared to
allowing the players to play indefinitely.

We are given a set of $n$ locations and two players, $A$ and $B$. Each player has her own, private
numbering of locations using numbers from $[n]\coloneqq \{1,\ldots,n\}$. A \emph{tactic} for a player is
a function $\tau\colon [n]\to [n]$, where~$\tau(i)$ is interpreted as the index of the location that the
player intends to visit at round $i$, in her own numbering. A \emph{strategy} for a player is a
probability distribution $\sigma$ over the tactics of this player. Note that the set of possible tactics
is finite, hence we may use the discrete $\sigma$-field where every subset of tactics is measurable.  The
sets of tactics and strategies for the game on $n$ locations are denoted by $\Tct_n$ and $\Str_n$,
respectively.

For two given strategies $\sigma_A$ and $\sigma_B$, the game is played as follows:
\begin{itemize}
    \item Players $A$ and $B$ respectively draw their tactics $\tau_A$ and $\tau_B$ from the strategies
        $\sigma_A$ and $\sigma_B$ at random.
    \item A permutation $\pi\colon [n]\to [n]$ that matches the numberings of locations of $A$ and $B$ is
        drawn uniformly at random. This permutation $\pi$ will be called the \emph{\fit}.
    \item The waiting time till rendezvous is indicated by the random variable
        \[\wtime{\sigma_1}{\sigma_2}=\min \left(\{\,i\colon \pi(\tau_A(i))=\tau_B(i)\,\}\cup \{n+1\}\right).\]
\end{itemize}
Then the question of minimizing the waiting time till rendezvous for symmetric players corresponds to the problem of minimizing the expected value of $\wtime{\sigma_A}{\sigma_B}$ over the strategies $\sigma_A$ and $\sigma_B$, subject to $\sigma_A=\sigma_B$. 

Note that in this model, we assume that every player fixes her tactic at the beginning of the game and then follows this tactic. Observe that this does not restrict the players in any way, as throughout the play they receive no information that could influence the choice of the next moves. Indeed, when entering a location, the player only receives the information that the other player is not there, or otherwise the game immediately finishes. Hence, there is no point in considering adaptativity in strategies.

The main result of this work can be now phrased as follows.

\begin{theorem}\label{thm:main}
There exists $\eps>0$ such that for every $n\geq 2$ and every strategy $\sigma\in \Str_n$, we have
    \[\Exp \wtime{\sigma}{\sigma}\geq \frac{n+1}{2}+\eps n.\]
\end{theorem}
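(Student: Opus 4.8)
The plan is to bound the game restricted to its first $n$ rounds (as the text permits, since stopping early only lowers the expected time) and to track, for a fixed drawn pair of tactics $\tau_A,\tau_B$, the random variable
\[ X = \bigl|\{\, i\in[n] \colon \pi(\tau_A(i))=\tau_B(i) \,\}\bigr|, \]
the \emph{total} number of rendezvous taken over the random binding $\pi$. Since $\pi(a)$ is uniform for each fixed $a$, one has $\Exp_\pi X=\sum_i \tfrac1n=1$ for \emph{every} pair of tactics. I call a tactic \emph{passive} if it visits at most $n/2$ locations and \emph{active} otherwise; writing $p$ for the probability that $\sigma$ draws a passive tactic, two independent draws give $\Pb[\text{same kind}]=p^2+(1-p)^2\ge\tfrac12$ by convexity. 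The entire argument then reduces to one claim: \emph{same-kind tactics fail to meet with constant probability}, i.e. $\Pb_\pi[X=0]\ge c$ for an absolute constant $c>0$ whenever $\tau_A,\tau_B$ have the same kind.

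Granting that claim, I would finish as follows. Let $\beta=\Pb[\wtime{\sigma}{\sigma}=n+1]=\Pb[X=0]$ over all the randomness; conditioning on the drawn pair gives $\beta\ge\Pb[\text{same kind}]\cdot c\ge c/2$. Because $\Pb[\wtime{\sigma}{\sigma}=i]\le\tfrac1n$ for each $i\le n$ and the survival function is non-increasing, we get $\Pb[\wtime{\sigma}{\sigma}>i]\ge\max(1-\tfrac in,\beta)$ for all $i$, whence
\[ \Exp\wtime{\sigma}{\sigma}=\sum_{i=0}^{n}\Pb[\wtime{\sigma}{\sigma}>i]\ge\sum_{i=0}^{n}\max\!\bigl(1-\tfrac in,\beta\bigr)=\frac{n+1}{2}+\Omega(n\beta^2). \]
With $\beta\ge c/2$ this yields $\frac{n+1}{2}+\eps n$ for a suitable absolute $\eps$.

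To prove the core claim I would first secure a constant lower bound on $\Var_\pi X$. Expanding $X=\sum_i \mathbf 1[\pi(\tau_A(i))=\tau_B(i)]$ and evaluating the pairwise covariances (which depend only on whether two rounds repeat a location of $A$, of $B$, of both, or of neither) gives the exact identity
\[ \Var_\pi X = 1 + \frac{P}{n-1} - \frac{a+b}{n(n-1)}, \]
where $a,b$ count the ordered pairs of rounds at which $A$, respectively $B$, revisits a location, and $P$ counts the pairs at which \emph{both} players revisit simultaneously. For two active tactics $a,b$ are small (at most $\approx n^2/4$ each, as more than $n/2$ locations are used), so discarding $P\ge0$ already gives $\Var_\pi X\ge\tfrac{n}{2(n-1)}>\tfrac12$. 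For two passive tactics $a,b$ may reach $\approx n^2$, but then the tactics are concentrated, the number of distinct joint colours $(\tau_A(i),\tau_B(i))$ is small, and a Cauchy–Schwarz bound forces $P$ to be correspondingly large; interpolating between $P\ge0$ (spread regime) and the joint-colour lower bound on $P$ (concentrated regime) keeps $\Var_\pi X$ above a constant throughout.

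The hard part will be the passage from a variance lower bound to a lower bound on $\Pb_\pi[X=0]$. Integrality and $\Exp_\pi X=1$ give the clean identity $\Pb_\pi[X=0]=\tfrac12\,\Exp_\pi|X-1|$, so it suffices to show $X$ is genuinely spread out; but variance alone only yields $\Pb_\pi[X=0]\ge\Var_\pi X/\max_\pi X=\Omega(1/n)$, which is far too weak, since the variance may be carried by rare large values of $X$ (as for near-constant tactics, or two aligned permutations where $X$ can reach $n$). To defeat this I would use that $\Pb_\pi[X=0]$ depends only on the \emph{support} $F$ of the joint-colour matrix — the forbidden pairs of locations, one of $A$ and one of $B$ — and that the same-kind hypothesis independently caps every row-degree and column-degree of $F$ by $n/2$, while $|F|\le n$. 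With these degree bounds the upper tail of $X$ becomes controllable, either through a bound on the fourth central moment combined with the variance via Hölder, or through a direct estimate of the probability that a uniform permutation avoids $F$ (exposing the few high-degree rows first and treating the sparse remainder by a Janson-type inequality). Reconciling the ``few but dense'' and ``many but sparse'' shapes of $F$ in one argument is the delicate point, and the lossiness of this step is what forces the very small admissible constant $\eps=2^{-36}$.
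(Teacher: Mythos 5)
Your outer reduction is exactly the paper's and is correct: restricting the game to $n$ rounds, splitting tactics into passive and active, noting $p^2+(1-p)^2\geq\frac12$, and converting a uniform bound $\Pb_\pi[X=0]\geq c$ for same-kind pairs of tactics into $\Exp\wtime{\sigma}{\sigma}\geq\frac{n+1}{2}+\Omega(c^2)\, n$ via the survival function (your global version of this computation is sound and matches the paper's Lemma~\ref{lem:pb-waiting}). Your exact identity $\Var_\pi X=1+\frac{P}{n-1}-\frac{a+b}{n(n-1)}$ is also correct, and the active--active variance bound does follow from it. But the proof has a genuine hole exactly where you flag ``the hard part'': nothing in the proposal actually establishes $\Pb_\pi[X=0]\geq c$ for same-kind tactics. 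For passive--passive pairs even the variance lower bound is only gestured at (the ``interpolation'' between the spread and concentrated regimes is a real, non-trivial optimization you do not carry out), and the decisive step --- from a variance bound to an anti-concentration bound --- is missing entirely.

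Moreover, the rescue routes you propose cannot work with your round-counting variable $X=|\{i:\pi(\tau_A(i))=\tau_B(i)\}|$. The event $\{X=0\}$ indeed depends only on the support $F$, but the moments of $X$ do not: they depend on the multiplicities with which the colours $(\tau_A(i),\tau_B(i))$ repeat, and the degree bounds on $F$ do not control these. Concretely, take both tactics constant and aligned (a passive--passive pair): $F=\{(1,1)\}$, all row and column degrees equal $1$, yet $X=n$ with probability $\frac1n$, so $\Exp X^4= n^3$ and any fourth-moment/H\"older argument collapses; a Janson-type avoidance estimate faces the same multiplicity problem. (Also, for active--active pairs the degree cap is $\frac{n+1}{2}$ rather than $\frac n2$, and it requires a separate counting argument.) The missing idea --- and the paper's key move --- is to replace the round count by the \emph{distinct-pair} count $X'=|F\cap E(\pi)|$, where $E(\pi)=\{(i,\pi(i)):i\in[n]\}$. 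Then $\Exp X'=\frac{|F|}{n}\leq 1$, and the ``few but dense'' versus ``many but sparse'' dichotomy you worry about resolves cleanly: if $|F|\leq\frac{11}{12}n$ (the concentrated shape, including the example above), Markov's inequality alone gives $\Pb[X'=0]\geq\frac1{12}$ with no variance input at all; if $|F|>\frac{11}{12}n$, then $\Exp X'^4\leq 15$ holds unconditionally (joint inclusion probabilities of distinct pairs under a uniform permutation), so the Paley--Zygmund inequality applied to $|X'-\Exp X'|^2$, combined with integrality and $\Exp X'\leq 1$ (which give $\Pb[X'=0]\geq\Pb[X'\geq 2]$), converts a variance lower bound into $\Pb[X'=0]\geq(\Var X')^2/128$. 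The variance of $X'$ is in turn bounded below by a bipartite-graph argument producing two sets of $n/8$ edges of $F$ that are pairwise disjoint across the sets. Without this switch from counting rounds to counting distinct location-pairs (or some equivalent taming of the upper tail), your plan cannot be completed as stated.
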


In the proof of Theorem~\ref{thm:main} we will use the lower bound for the waiting time for asymmetric strategies of Anderson and Weber~\cite{AndersonW90}. Note that the proof of this result also holds for the game stopped after round $n$.

\begin{theorem}[Anderson and Weber~\cite{AndersonW90}]\label{thm:asymmetric}
For every $n\in \N$ and pair of strategies $\sigma_A,\sigma_B\in \Str_n$, we have
\[\Exp \wtime{\sigma_A}{\sigma_B}\geq \frac{n+1}{2}.\]
\end{theorem}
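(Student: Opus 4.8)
The plan is to avoid reasoning about the first meeting time directly and instead to bound, round by round, the probability that the players have not yet met. The single observation carrying all the content is that, for \emph{any} pair of strategies $\sigma_A,\sigma_B$ and \emph{any} round $i\in[n]$, the probability that the two players occupy the same location in round $i$ equals exactly $\frac1n$. To see this, write $\tau_A,\tau_B$ for the tactics drawn from $\sigma_A,\sigma_B$ and $\pi$ for the binding permutation, and let $H_i$ be the event $\pi(\tau_A(i))=\tau_B(i)$. Conditioning on the drawn tactics fixes the values $\tau_A(i)$ and $\tau_B(i)$ in $[n]$, while $\pi$ remains a uniformly random permutation that is independent of the tactics; hence $\pi(\tau_A(i))$ is uniform over $[n]$ and $\Pr[H_i\mid \tau_A,\tau_B]=\frac1n$. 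Averaging over the tactics gives $\Pr[H_i]=\frac1n$ for every $i\in[n]$, regardless of how the strategies distribute their mass.

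Next I would pass to the expectation through the tail-sum formula. As $\wtime{\sigma_A}{\sigma_B}$ takes values in $\{1,\dots,n+1\}$,
\[\Exp\wtime{\sigma_A}{\sigma_B}=\sum_{i=1}^{n+1}\Pr\!\left[\wtime{\sigma_A}{\sigma_B}\geq i\right].\]
The event $\{\wtime{\sigma_A}{\sigma_B}\geq i\}$ is precisely the event that no meeting occurred in rounds $1,\dots,i-1$, that is, the complement of $H_1\cup\cdots\cup H_{i-1}$. A union bound therefore yields
\[\Pr\!\left[\wtime{\sigma_A}{\sigma_B}\geq i\right]\geq 1-\sum_{j=1}^{i-1}\Pr[H_j]=1-\frac{i-1}{n}.\]
Crucially, this step needs no control over the dependence between different rounds --- which is where any difficulty might otherwise hide --- because the union bound only consumes the individual probabilities $\Pr[H_j]$ computed above.

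Plugging this into the tail sum and summing the resulting arithmetic progression, I expect to land exactly on the target:
\[\Exp\wtime{\sigma_A}{\sigma_B}\geq\sum_{i=1}^{n+1}\left(1-\frac{i-1}{n}\right)=(n+1)-\frac1n\cdot\frac{n(n+1)}{2}=\frac{n+1}{2}.\]
Thus there is no genuine obstacle once the per-round meeting probability $\frac1n$ is established; that identity is the crux, and it is exactly the place where the uniformity and independence of the binding permutation are used. As a sanity check, the bound is tight: for the wait-for-mommy pair the events $H_1,\dots,H_n$ are pairwise disjoint, so every union bound above is an equality and the expected waiting time is exactly $\frac{n+1}{2}$.
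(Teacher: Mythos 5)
Your proof is correct, and it is essentially the argument the paper relies on: the paper offers no proof of its own for Theorem~\ref{thm:asymmetric} (it cites Anderson and Weber, noting the proof survives truncation at round $n$), but your exact computation --- conditioning on the tactics so that $\pi(\tau_A(i))$ is uniform and each round meets with probability $\frac{1}{n}$, then the union bound $\Pb(\wtime{\sigma_A}{\sigma_B}>k)\geq 1-\frac{k}{n}$ and the tail sum $\sum_{k=0}^{n}\bigl(1-\frac{k}{n}\bigr)=\frac{n+1}{2}$ --- is carried out verbatim inside the paper's proof of Lemma~\ref{lem:pb-waiting}. Your handling of mixed strategies by averaging over the drawn tactics, and the tightness check via the wait-for-mommy pair, are both sound.
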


As mentioned in Section~\ref{sec:intro}, the lower bound provided by Theorem~\ref{thm:asymmetric} is
tight, as witnessed by the wait-for-mommy pair of strategies: $\sigma_A$ is the baby strategy that
deterministically picks a tactic that maps all integers~$i\in [n]$ to $1$, while $\sigma_B$ is the mommy strategy
that deterministically picks the identity function as the tactic.

\section{Proof of \texorpdfstring{Theorem~\ref{thm:main}}{Theorem 1}}

For the rest of the proof we fix the number of locations $n$ to be at least~$2$. For brevity we write
$\Tct\coloneqq\Tct_n$ and $\Str\coloneqq\Str_n$.

\subsection{Passive and active tactics}

Let us start by taking a closer look at the mapping $\sigma_A,\sigma_B\mapsto \Exp
\wtime{\sigma_A}{\sigma_B}$, where $\sigma_A,\sigma_B\in \Str$. We shall try to understand this mapping
from the point of view of linear algebra. 

For tactics $\tau_A,\tau_B\in \Tct$, let
\[\wt{\tau_A}{\tau_B} \coloneqq \Exp\left[\min \left(\{\,i\colon \pi(\tau_A(i))=\tau_B(i)\,\}\cup \{n+1\}\right)\right].\]
Note that here, the tactics $\tau_A,\tau_B$ are fixed and the expectation is taken only over the choice
of the \fit~$\pi$. Let us define a bilinear operator
\[\Phi\colon \R^{\Tct}\times \R^{\Tct}\to
\R\qquad\textrm{as}\qquad \Phi\langle x,y\rangle \coloneqq\sum_{\tau_A,\tau_B\in \Tct}
\wt{\tau_A}{\tau_B}\cdot x_{\tau_A} y_{\tau_B},\]
where $x,y\in \R^{\Tct}$ are vectors indexed by the elements of $\Tct$. Then
\[\Exp \wtime{\sigma_A}{\sigma_B} = \Phi\langle a,b\rangle,\]
where $a,b\in \R^{\Tct}$ are such that $a_\tau$ is the probability of drawing $\tau$ in the distribution
$\sigma_A$, and similarly for~$b_\tau$.

The main idea is as follows. As witnessed by the tightness example for Theorem~\ref{thm:asymmetric}, the
operator $\Phi\langle\cdot,\cdot\rangle$ achieves its minimum possible value when the strategies
$\sigma_A$ and $\sigma_B$ are sort of ``orthogonal''. Namely, one strategy should focus on baby-like
tactics --- being in a few locations and waiting for the other player --- while the other strategy should
focus on mommy-like tactics --- seeking through a large number of location in search of the other player.
Playing a baby-like tactic against a mommy-like tactic yields low waiting time, while the intuition is
that playing two baby-like tactics against each other, or two mommy-like tactics against each other,
should result in waiting time significantly larger than $\frac{n+1}{2}$. When the two players are forced
to use the same strategy, there is a significant probability --- at least $\frac{1}{2}$ --- that they end
up playing tactics of the same kind. This increases the expected waiting time significantly above
$\frac{n+1}{2}$.

We now formalize this intuition, calling baby-like tactics
\emph{passive} and mommy-like tactics \emph{active}.

\begin{definition}
A tactic $\tau\in \Tct$ is called \emph{passive} if $|\tau([n])|\leq n/2$ and \emph{active} otherwise.
The sets of passive and active tactics are denoted by $\Tctf$ and $\Tctm$, respectively.
\end{definition}

In the next sections we will focus on the following lemma.

\begin{lemma}\label{lem:same-sex}
There exists $\delta>0$ such that for all $\tau_A,\tau_B\in \Tct$ satisfying either $\tau_A,\tau_B\in \Tctf$ or $\tau_A,\tau_B\in \Tctm$, we have
\[\wt{\tau_A}{\tau_B}\geq  \frac{n+1}{2}+\delta n.\]
\end{lemma}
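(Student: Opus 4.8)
The plan is to fix $\tau_A,\tau_B$ of the same kind and study the random set of rendezvous over the choice of the binding $\pi$. For $i\in[n]$ let $M_i$ be the indicator of the event $\pi(\tau_A(i))=\tau_B(i)$, and set $X_i=\sum_{j\le i}M_j$ and $X=X_n$; thus $X$ counts all rendezvous in the $n$ rounds if the game is not stopped at the first meeting. Since $\pi$ is a uniform permutation, $\Pb[M_i]=\tfrac1n$ for every $i$, so $\Exp[X_i]=\tfrac in$ and $\Exp[X]=1$. Writing $T$ for the stopped waiting time, so that $\wt{\tau_A}{\tau_B}=\Exp[T]$ and $\{T>i\}=\{X_i=0\}$, the tail-sum formula yields the identity
\[\wt{\tau_A}{\tau_B}=1+\sum_{i=1}^n\Pb[X_i=0].\]
Since $\frac{n+1}2=1+\sum_{i=1}^n\bigl(1-\tfrac in\bigr)$, it suffices to bound the total excess $\sum_{i=1}^n\bigl(\Pb[X_i=0]-(1-\tfrac in)\bigr)$ from below by $\delta n$.

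The first, purely analytic, step reduces the lemma to a single anti-concentration statement. On one hand Markov's inequality gives $\Pb[X_i=0]\ge 1-\Exp[X_i]=1-\tfrac in$ (this is exactly the per-round estimate underlying the $\frac{n+1}2$ baseline of \cref{thm:asymmetric}); on the other hand, since $X_i\le X$ we have $\{X=0\}\subseteq\{X_i=0\}$, so $\Pb[X_i=0]\ge p_0$ where $p_0\coloneqq\Pb[X=0]$. Combining, $\Pb[X_i=0]\ge\max\bigl(1-\tfrac in,\,p_0\bigr)$, and therefore
\[\wt{\tau_A}{\tau_B}-\frac{n+1}2\ \ge\ \sum_{i\,:\,1-i/n<p_0}\Bigl(p_0-\bigl(1-\tfrac in\bigr)\Bigr)\ \ge\ \tfrac14\,p_0^2\,n,\]
the last inequality coming from summing the nonnegative arithmetic progression formed by the $\Theta(np_0)$ indices $i>n(1-p_0)$, whose terms range up to $p_0$. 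Consequently it is enough to prove that $p_0=\Pb[X=0]$ is bounded below by a universal constant $c>0$ whenever $\tau_A,\tau_B$ have the same kind; then $\delta=c^2/4$ works for all large $n$, while the finitely many small cases are checked directly.

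The heart of the argument, and the main obstacle, is the lower bound $\Pb[X=0]\ge c$ under the same-kind hypothesis. Here I would use $\Pb[X=0]=\Exp[(X-1)^+]=\tfrac12\Exp|X-1|$ (valid because $\Exp[X]=1$), which turns the goal into showing that $X$ is spread in $L^1$ around its mean $1$ by a constant. The plan is a second-moment analysis: compute $\Exp[X(X-1)]=\sum_{j\ne k}\Pb[M_jM_k]$ and classify ordered pairs of rounds by their collision type — identical constraint $(\tau_A(j),\tau_B(j))=(\tau_A(k),\tau_B(k))$, contributing $\tfrac1n$; sharing exactly one coordinate, contributing $0$; and fully distinct, contributing $\tfrac1{n(n-1)}$ — so that $\Var[X]$ becomes an explicit function of the numbers of pairs of each type. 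The same-kind assumption enters precisely to keep $\Var[X]$ away from $0$: the only way to have $\Exp[X]=1$ together with $\Var[X]=0$ is $X\equiv1$, realised by the extreme baby-against-mommy pairing, an extreme passive tactic against an active one, which is exactly the configuration forbidden when $\tau_A,\tau_B\in\Tctf$ or $\tau_A,\tau_B\in\Tctm$. Requiring matching kinds balances the multiplicity profiles of the $\tau_A$- and $\tau_B$-values and keeps the number of one-coordinate-sharing pairs below the value that would cancel the variance.

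Converting a variance lower bound into $\Pb[X=0]\ge c$ is where care is needed, because a constant variance does not by itself force mass at $0$: a rare but massive simultaneous clump of rendezvous inflates $\Var[X]$ while keeping $\Pb[X=0]$ small. I would therefore split into two regimes according to the amount of clumping, as measured by the number of identical-constraint pairs. In the lightly clumped regime the upper tail of $X$ is controlled, and a Paley--Zygmund-type second-moment bound converts the variance estimate into $\Exp|X-1|\ge\textrm{const}$; in the heavily clumped regime meetings are rare but come in large batches, so $\Pb[X=0]$ is directly large. Producing one universal constant $c$ that covers both regimes across the entire passive and active ranges — in particular the delicate tactics sitting just on either side of the $n/2$ threshold — is the technically hardest point; everything preceding it is bookkeeping with the collision counts.
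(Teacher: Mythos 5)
Your reduction framework is correct and closely parallels the paper's own: the tail-sum estimate $\wt{\tau_A}{\tau_B}-\frac{n+1}{2}\geq \frac14 p_0^2 n$ is Lemma~\ref{lem:pb-waiting} in disguise (the paper gets $\frac{\beta^2}{2}n$ by the same computation), your ``heavily clumped'' regime is the paper's corner case $m\leq \frac{11}{12}n$ settled by Markov's inequality, and the identity $\Pb(X=0)=\frac12\Exp|X-1|$ --- valid because $X$ is a nonnegative integer with $\Exp X=1$ --- is a clean substitute for the paper's observation that $\Pb(X=0)\geq \Pb(X\geq 2)$ whenever $\Exp X\leq 1$. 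The classification of pairs of rounds by collision type likewise matches Assertion~\ref{cl:pairs} and Lemma~\ref{lem:dp-var}.

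However, the heart of the lemma --- the only place where the hypothesis $\tau_A,\tau_B\in\Tctf$ or $\tau_A,\tau_B\in\Tctm$ can enter --- is missing, and what you write in its place is not an argument. You need: if the tactics have the same kind and clumping is light, then $\Var X\geq \alpha$ for a universal constant $\alpha>0$. Your justification (``matching kinds balances the multiplicity profiles and keeps the number of one-coordinate-sharing pairs below the value that would cancel the variance'') restates the goal rather than proving it, and it is miscalibrated besides: in your own setup, $\Exp X=1$ gives $\Var X=\Exp[X(X-1)]=\frac{N_{\mathrm{id}}}{n}+\frac{N_{\mathrm{dis}}}{n(n-1)}$, where $N_{\mathrm{id}}$ and $N_{\mathrm{dis}}$ count ordered pairs of rounds with identical, respectively disjoint, constraint pairs; the one-coordinate-sharing pairs contribute exactly zero, so there is no cancellation to prevent. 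The real task is to show that when $N_{\mathrm{id}}$ is small, same-kind tactics force $N_{\mathrm{dis}}=\Omega(n^2)$, and this is precisely the paper's combinatorial core (Lemmas~\ref{lem:many-disjoint} and~\ref{lem:same-sex-variance}): (i) same kind implies that in the bipartite graph with edge set $F$ every vertex has degree at most $\frac{2}{3}n$ --- for passive pairs because both images have size at most $n/2$, and for active pairs because a large image forces every value to occur at most $\frac{n+1}{2}$ times; (ii) any bipartite graph with at least $\frac{11}{12}n$ edges and maximum degree at most $\frac{2}{3}n$ contains two edge sets of size $n/8$ each that are pairwise disjoint across the sets, proved by splitting each side into two parts of degree-sum at least $n/4$ and applying pigeonhole to the four resulting quadrants. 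Nothing in your proposal substitutes for this step, and you explicitly defer it as ``the technically hardest point,'' so the proof does not go through as written. A secondary, fixable defect: because your $X$ counts meetings with multiplicity, the fourth-moment bound that any Paley--Zygmund-type conversion needs fails in general (constant tactics give $\Exp X^4=n^3$), so your light-clumping regime must be defined so that a constant fourth-moment bound is restored; the paper sidesteps this entirely by letting $X$ count distinct pairs hit, for which $\Exp X^4\leq 15$ unconditionally (Assertion~\ref{cl:fourth}).
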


Before we proceed to prove Lemma~\ref{lem:same-sex}, let us see how Theorem~\ref{thm:main} follows from
it.

\begin{proof}[of Theorem~\ref{thm:main} assuming Lemma~\ref{lem:same-sex}]
We first note that from Theorem~\ref{thm:asymmetric} applied to two deterministic strategies we may infer that
\begin{equation}\label{eq:raccoon}
\wt{\tau_A}{\tau_B}\geq \frac{n+1}{2}\qquad \textrm{for all }\tau_A,\tau_B\in \Tct.
\end{equation}

Let $a\in \R^\Tct$ be such that $a_\tau$ is the probability that tactic $\tau$ is drawn by the strategy $\sigma$. Write
\[a=a^{\P}+a^{\A},\]
where the supports of $a^{\P}$ and $a^{\A}$ are passive and active tactics, respectively. As $\wt{\cdot}{\cdot}$ is a symmetric function, we have 
\begin{equation}\label{eq:beaver}
\Exp\wtime{\sigma}{\sigma}=\Phi\langle a,a\rangle= \Phi\langle a^{\P},a^{\P}\rangle+\Phi\langle a^{\A},a^{\A}\rangle+2\cdot\Phi\langle a^{\P},a^{\A}\rangle.
\end{equation}
Let $p\coloneqq\sum_{\tau\in \Tctf}a_\tau$ be the probability that $\sigma$ yields a passive tactic. Then, by Lemma~\ref{lem:same-sex}, we have
\begin{eqnarray}
\Phi\langle a^{\P},a^{\P}\rangle & = & \sum_{\tau_A,\tau_B\in \Tctf} \wt{\tau_A}{\tau_B} \cdot a_{\tau_A}a_{\tau_B}\nonumber\\
& \geq & \left(\frac{n+1}{2}+\delta n\right)\cdot \sum_{\tau_A,\tau_B\in \Tctf} a_{\tau_A}a_{\tau_B} =p^2\cdot \left(\frac{n+1}{2}+\delta n\right).\label{eq:female-squirrel}
\end{eqnarray}
Using Lemma~\ref{lem:same-sex} again, we analogously infer that
\begin{equation}\label{eq:male-squirrel}
    \Phi\langle a^{\A},a^{\A}\rangle \geq{(1-p)}^2\cdot \left(\frac{n+1}{2}+\delta n\right).
\end{equation}
A similar computation using~\eqref{eq:raccoon} yields that
\begin{equation}\label{eq:platypus}
\Phi\langle a^{\P},a^{\A}\rangle \geq p(1-p)\cdot \frac{n+1}{2}.
\end{equation}
Finally, letting $\eps\coloneqq\delta/2$ we can combine~\eqref{eq:beaver},~\eqref{eq:female-squirrel},~\eqref{eq:male-squirrel}, and~\eqref{eq:platypus} to conclude that
\begin{eqnarray*}
\Exp\wtime{\sigma}{\sigma} & \geq & (p^2+(1-p)^2+2p(1-p))\cdot \frac{n+1}{2} + p^2\cdot \delta \cdot n + (1-p)^2\cdot \delta \cdot n \\
& = & \frac{n+1}{2} + 2\eps\cdot (p^2+(1-p)^2)\cdot n \geq \frac{n+1}{2} + \eps n,
\end{eqnarray*}
where the last inequality follows from the convexity of the function $x\mapsto x^2$.
\end{proof}

It thus remains to prove Lemma~\ref{lem:same-sex}.

\subsection{High probability of no rendezvous gives high expected waiting time}

We now start analyzing the game when played between a fixed pair of tactics, with the goal of
establishing lower bounds for the expected waiting time till a rendezvous. The intuition is that this
waiting time should be significantly higher than $\frac{n+1}{2}$ provided the probability that during the
$n$ rounds of the game there is no rendezvous at all is bounded from below by some positive constant.
This is made formal in the following lemma.

\begin{lemma}\label{lem:pb-waiting}
Suppose $\tau_A,\tau_B\in \Tct$ are such that
\[\Pb\left(\tau_A(\pi(i))\neq \tau_B(i)\textrm{ for all }i\in [n]\right)\geq \beta\] for some constant $\beta>0$. Then
\[\wt{\tau_A}{\tau_B}\geq \frac{n+1}{2}+\frac{\beta^2}{2}\cdot n.\]
\end{lemma}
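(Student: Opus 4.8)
The plan is to work directly with the distribution of the waiting time $T \coloneqq \min(\{i \colon \pi(\tau_A(i)) = \tau_B(i)\} \cup \{n+1\})$, whose expectation over the random \fit~$\pi$ is exactly $\wt{\tau_A}{\tau_B}$. Writing $S_t \coloneqq \Pb(T > t)$ for the probability that no rendezvous has happened within the first $t$ rounds, the tail-sum formula gives
\[\wt{\tau_A}{\tau_B} = \Exp[T] = \sum_{t=0}^{n} S_t,\]
so it suffices to lower bound each $S_t$. The hypothesis bounds from below by $\beta$ the probability of the event that no rendezvous occurs at all, which is exactly $\{T > n\}$; that is, $S_n \geq \beta$.

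I would then establish two independent lower bounds on $S_t$. First, a union bound over rounds: since $\pi$ is a uniformly random permutation, for each fixed round $i$ we have $\Pb(\pi(\tau_A(i)) = \tau_B(i)) = 1/n$, whence $\Pb(T \leq t) \leq t/n$ and therefore
\[S_t \geq 1 - \frac{t}{n} \qquad \text{for every } t \in \{0, 1, \dots, n\}.\]
This is the pointwise refinement underlying Theorem~\ref{thm:asymmetric}: summing it alone already gives $\sum_{t=0}^{n}(1 - t/n) = \frac{n+1}{2}$. Second, because $\{T > t\}$ is a decreasing family of events, $S_t$ is non-increasing in $t$, so $S_t \geq S_n \geq \beta$ for all $t \leq n$. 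Combining the two yields the envelope bound $S_t \geq \max\!\left(1 - \frac{t}{n},\, \beta\right)$.

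It then remains to sum the envelope, which is a deterministic optimisation pinned at its two ends. Separating $\max(1 - t/n, \beta) = (1 - t/n) + (\beta - 1 + t/n)^+$ and using $\sum_{t=0}^n (1 - t/n) = \frac{n+1}{2}$, the extra mass is $\sum_{t=0}^n (\beta - 1 + t/n)^+ = \sum_{j \geq 0}(\beta - j/n)^+$ after substituting $j = n - t$. Since $x \mapsto (\beta - x/n)^+$ is non-increasing and non-negative, this discrete sum dominates its integral $\int_0^\infty (\beta - x/n)^+ \, dx = \frac{\beta^2}{2}\,n$, so that $\wt{\tau_A}{\tau_B} \geq \frac{n+1}{2} + \frac{\beta^2}{2}\, n$, as desired. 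The only genuinely game-theoretic input is the per-round probability $1/n$ feeding the union bound; the main point to check carefully is that the discrete envelope sum really meets the integral value $\frac{\beta^2}{2}n$ for every $n \geq 2$, which the left-Riemann comparison for a decreasing summand makes clean, the rounding always going in the favourable direction.
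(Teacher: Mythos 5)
Your proof is correct and takes essentially the same route as the paper: the tail-sum formula $\Exp T=\sum_{t=0}^n \Pb(T>t)$, the per-round union bound giving $\Pb(T>t)\geq 1-t/n$, monotonicity of the tail events giving $\Pb(T>t)\geq\beta$, and then summing the envelope $\max\left(1-t/n,\beta\right)$. The only difference is cosmetic: you bound the excess sum $\sum_{t}(\beta-1+t/n)^+$ by a left-Riemann comparison with $\int_0^\infty(\beta-x/n)^+\,dx=\frac{\beta^2}{2}n$, which is arguably cleaner than the paper's explicit evaluation with ceilings, but yields the same constant.
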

\begin{proof}
Let $Z$ be the random variable defined as the waiting time till the first rendezvous, that is, 
\[Z\coloneqq\min\left(\{\,i\colon \pi(\tau_A(i))=\tau_B(i)\,\}\cup \{n+1\}\right).\]
Note that here $\tau_A,\tau_B$ are fixed, so $Z$ depends only on the random choice of the \fit~$\pi$; formally, $Z$ is $\pi$-measurable. Then \[\wt{\tau_B}{\tau_B}=\Exp Z.\]
Observe that $Z$ is a random variable with values in $\{1,2,\ldots,n+1\}$, hence we have
\[\Exp Z = \sum_{k=0}^n \Pb(Z>k).\]
Note that we have $Z>k$ if and only if during the first $k$ rounds the players did not meet. Clearly, during every fixed round, the players meet with probability $\frac{1}{n}$. Hence, by the union bound, the probability that they do not meet during the first $k$ rounds is at least $1-\frac{k}{n}$.
On the other hand, by the assumption of the lemma, this probability is also at least $\beta$. We conclude that
\[\Pb(Z>k)\geq \max\left(1-\frac{k}{n},\beta\right)\qquad\textrm{for all } k\in [n].\]
By combining the above observations it follows that
\begin{eqnarray*}
\wt{\tau_A}{\tau_B} & = & \sum_{k=0}^n \Pb(Z>k) \geq \sum_{k=0}^n \max \left(1-\frac{k}{n},\beta\right) \\
& = & \sum_{k=0}^n \left(1-\frac{k}{n}\right) + \sum_{k=0}^n \max\left(0,\beta-\left(1-\frac{k}{n}\right)\right) \\
& = & \frac{n+1}{2}+\sum_{k=\lceil (1-\beta)n\rceil}^{n} \left(\beta-\left(1-\frac{k}{n}\right)\right)\\
& = & \frac{n+1}{2}+(\beta-1)\cdot (n-\lceil (1-\beta)n\rceil+1)+
\frac{1}{n}\cdot \frac{n+\lceil (1-\beta)n\rceil}{2}\cdot (n-\lceil (1-\beta)n\rceil+1) \\
& = & \frac{n+1}{2}+(n-\lceil (1-\beta)n\rceil+1)\cdot \left(\beta + \frac{\lceil (1-\beta)n\rceil-n}{2n}\right)\\
& \geq & \frac{n+1}{2}+\beta n \cdot \frac{\beta}{2} = \frac{n+1}{2}+\frac{\beta^2}{2}\cdot n.
\end{eqnarray*}
This concludes the proof.
\end{proof}

Thus, by Lemma~\ref{lem:pb-waiting}, for the proof of 
Lemma~\ref{lem:same-sex}
it suffices to show that the probability that no rendezvous occurs throughout the $n$ rounds of the game is bounded away from zero.

\subsection{High variance gives high probability of no rendezvous}\label{sec:var}

Fix a pair of tactics $\tau_A,\tau_B\in \Tct$. Let
\[F\coloneqq\{(\tau_A(i),\tau_B(i))\colon i \in [n]\}\subseteq [n]\times [n].\]
Set $m\coloneqq|F|$ and note that $m\leq n$.
Similarly, for the random \fit~$\pi$, let
\[E(\pi)\coloneqq\{(i,\pi(i))\colon i \in [n]\}\subseteq [n]\times [n].\]
For $f\in F$, let $X_f$ be the indicator random variable taking value $1$ if $f\in E(\pi)$ and $0$ otherwise. Further, let
\[X\coloneqq|F\cap E(\pi)|=\sum_{f\in F} X_f.\]
Note that here $\tau_A,\tau_B$ are considered fixed and $\pi$ is drawn at random, hence~${(X_f)}_{f\in F}$ and therefore~$X$ depend only on the choice of the random \fit~$\pi$; formally, these variables are $\pi$-measurable. Observe that the probability that no rendezvous occurs can be understood in terms of the random variable $X$ as follows:
\begin{equation}\label{eq:crocodile}
\Pb\left(\tau_A(\pi(i))\neq \tau_B(i)\textrm{ for all }i\in [n]\right) = \Pb(X=0).
\end{equation}
From now on, we adopt the above notation whenever the pair of tactics $\tau_A,\tau_B$ is clear from the context.

The next lemma is the key conceptual step in the proof. We show that in order to give a lower bound on the probability that no rendezvous occurs, it suffices to give a lower bound on the variance of $X$.

\begin{lemma}\label{lem:var-pb}
Suppose $\tau_A,\tau_B\in \Tct$ are such that
\[m\geq \left(1-\sqrt{\alpha/2}\right)\cdot n\qquad\textrm{and}\qquad\Var X\geq \alpha,\] for some constant $\alpha>0$. Then
\[\Pb(X=0)\geq \frac{\alpha^2}{128}.\]
\end{lemma}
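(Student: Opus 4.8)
The plan is to exploit a ``sub-Poisson'' feature of $X$: although $X$ can in principle be as large as $n$, its higher moments are tightly controlled. Writing $\mu\coloneqq\Exp X$, I first record that $\Exp X=m/n$, since each pair $f=(a,b)\in F$ satisfies $\Pb(f\in E(\pi))=\Pb(\pi(a)=b)=\tfrac1n$; in particular the hypotheses give $1-\sqrt{\alpha/2}\le\mu\le 1$. The key structural observation is that for every $r\ge 1$ the $r$-th factorial moment satisfies $\Exp[X(X-1)\cdots(X-r+1)]\le 1$. Indeed, expanding $X=\sum_{f\in F}X_f$, this factorial moment equals the sum, over ordered $r$-tuples of distinct pairs $f_1,\dots,f_r\in F$ with $f_j=(a_j,b_j)$, of $\Pb(\pi(a_j)=b_j\text{ for all }j)$; such a probability is nonzero only when the $f_j$ form a partial permutation (pairwise distinct first coordinates and pairwise distinct second coordinates), in which case it equals exactly $(n-r)!/n!$, while the number of such tuples is at most $m(m-1)\cdots(m-r+1)\le n(n-1)\cdots(n-r+1)$. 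Multiplying gives the bound $1$. In particular $S_2\coloneqq\Exp\binom X2\le\tfrac12$, $S_3\coloneqq\Exp\binom X3\le\tfrac16$, and $S_4\coloneqq\Exp\binom X4\le\tfrac1{24}$.

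Next I would reduce the target to a lower bound on $\Pb(X\ge 2)$. Since $\mu=\sum_{k\ge1}\Pb(X\ge k)\ge\Pb(X\ge1)+\Pb(X\ge2)$ and $\mu\le 1$, we obtain $\Pb(X\ge2)\le 1-\Pb(X\ge1)=\Pb(X=0)$, so it suffices to bound $\Pb(X\ge2)$ from below. To bring in the variance I split it as $\Var X=\mu(1-\mu)+\Exp[X(X-1)]$ and distinguish two cases according to which summand carries at least half of $\alpha$. If $\mu(1-\mu)\ge\alpha/2$, then, because $\mu\le 1$, we have $\Pb(X=0)\ge 1-\mu\ge\mu(1-\mu)\ge\alpha/2$, which already exceeds $\alpha^2/128$ (note $\alpha\le\Var X\le\tfrac14+1=\tfrac54$, so $\alpha/2\ge\alpha^2/128$). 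Otherwise $\Exp[X(X-1)]\ge\alpha/2$, that is $S_2\ge\alpha/4$.

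In this second case I would run a second-moment (Paley--Zygmund) argument on the nonnegative variable $Y\coloneqq\binom X2$, which is positive exactly when $X\ge 2$. By Cauchy--Schwarz, $\Pb(X\ge2)=\Pb(Y\ge1)\ge(\Exp Y)^2/\Exp[Y^2]$. The numerator is $S_2^2\ge\alpha^2/16$. For the denominator I would use the integer identity $\binom k2^2=6\binom k4+6\binom k3+\binom k2$, which gives $\Exp[Y^2]=6S_4+6S_3+S_2\le\tfrac14+1+\tfrac12=\tfrac74$ by the factorial-moment bounds above. Hence $\Pb(X\ge2)\ge\tfrac{\alpha^2/16}{7/4}=\tfrac{\alpha^2}{28}\ge\tfrac{\alpha^2}{128}$, and combining with $\Pb(X=0)\ge\Pb(X\ge2)$ closes this case as well.

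The conceptual heart---and the step I expect to be the real obstacle---is the factorial-moment bound $\Exp[X(X-1)\cdots(X-r+1)]\le 1$: it is precisely what prevents a vanishingly small probability mass at a huge value of $X$ from manufacturing the required variance without forcing $X=0$, and it is what keeps the final bound independent of $n$. The hypothesis $m\ge(1-\sqrt{\alpha/2})n$ enters only mildly, pinning $\mu$ near $1$; the essential inputs are $\mu\le 1$ (to convert mass at $X\ge 2$ into mass at $X=0$) together with the uniform control of the factorial moments (to convert the variance into mass at $X\ge 2$). Everything else is the routine case analysis and the elementary identity for $\binom k2^2$.
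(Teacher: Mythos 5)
Your proof is correct, but it takes a genuinely different route from the paper's. Both arguments share the same skeleton: bound fourth-order moment information about $X$ using $m\le n$, run a second-moment (Paley--Zygmund type) inequality to show that high variance forces probability mass away from the value $1$, and convert that into mass at $0$ via $\Pb(X=0)\ge\Pb(X\ge 2)$, which follows from $\Exp X\le 1$ exactly as you argue. The difference is in how the mass is pushed off $\{X=1\}$. The paper applies Paley--Zygmund to $Z\coloneqq|X-\Exp X|^2$, bounding $\Exp Z^2\le 1+\Exp X^4\le 16$ via a multinomial expansion (Assertions~\ref{cl:pairs} and~\ref{cl:fourth}, giving $\Exp X^4\le 15$), which yields $\Pb\bigl(|X-\Exp X|^2\ge\alpha/2\bigr)\ge\alpha^2/64$; the hypothesis $m\ge(1-\sqrt{\alpha/2})\,n$ is then used exactly once, to ensure $(1-\Exp X)^2<\alpha/2$, i.e.\ that this deviation event is disjoint from $\{X=1\}$. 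You instead split $\Var X=\mu(1-\mu)+\Exp[X(X-1)]$ and treat the two sources of variance separately: when the Bernoulli part dominates, Markov's inequality alone gives $\Pb(X=0)\ge 1-\mu\ge\alpha/2$; when the pair-count part dominates, Cauchy--Schwarz applied to $\binom{X}{2}$ --- which is supported on $\{X\ge 2\}$ by construction, so nothing needs to be excised --- gives $\Pb(X\ge 2)\ge\alpha^2/28$. Your uniform factorial-moment bound $\Exp[X(X-1)\cdots(X-r+1)]\le 1$ plays the role of the paper's two Assertions and is a cleaner packaging of the same computation. The net effect is that your argument never uses the hypothesis $m\ge(1-\sqrt{\alpha/2})\,n$ at all (your closing remark that it ``enters only mildly'' actually understates this), so you prove a strictly stronger lemma with a better constant; had the paper taken this route, the check that $1-\sqrt{\alpha/2}=\frac{7}{8}<\frac{11}{12}$ in the proof of Lemma~\ref{lem:same-sex} would be superfluous, and the statement of Lemma~\ref{lem:var-pb} could drop its first hypothesis. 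What the paper's formulation buys in exchange is a single, case-free application of Paley--Zygmund. All the individual steps you rely on check out: the identity $\binom{k}{2}^2=6\binom{k}{4}+6\binom{k}{3}+\binom{k}{2}$, the bounds $S_2\le\frac12$, $S_3\le\frac16$, $S_4\le\frac1{24}$, the resulting $\Exp\bigl[\binom{X}{2}^2\bigr]\le\frac74$, and the a priori bound $\alpha\le\Var X\le\frac54$ used to compare $\alpha/2$ with $\alpha^2/128$.
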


The proof of Lemma~\ref{lem:var-pb} spans the rest of this section.
The intuition is that high variance of $X$ means that $X$ is not well concentrated around its mean, which in turns implies that the probability of it being below the mean --- equivalently equal to $0$ --- is high. Hence, we need to understand the mean of $X$ as well as estimate its higher moments.

Observe that if~$f=(i,j)\in F$, then the probability that $\pi(i)=j$ is equal to $\frac{1}{n}$. Hence, $X_f$ takes value~$1$ with probability $\frac{1}{n}$ and $0$ with probability $1-\frac{1}{n}$. Consequently, we have
\[\Exp X_f=\frac{1}{n}\qquad \textrm{for each }f\in F.\]
By linearity of expectation,
\[\Exp X = \frac{m}{n}\leq 1.\]
In the sequel we will also need an upper bound on the fourth central moment of~$X$, that is, on~$\Exp|X-\Exp X|^4$.
To this end, we first establish, in the next two assertions, an upper bound on the fourth moment of $X$, that is, on~$\Exp X^4$.

\begin{claim}\label{cl:pairs}
For pairwise different pairs $e,f,g,h\in F$, we have
\begin{eqnarray*}
\Exp X_eX_f & \leq & \frac{1}{n(n-1)}\\
\Exp X_eX_fX_g & \leq & \frac{1}{n(n-1)(n-2)}\\
\Exp X_eX_fX_gX_h & \leq & \frac{1}{n(n-1)(n-2)(n-3)}.
\end{eqnarray*}
\end{claim}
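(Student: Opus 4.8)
The plan is to prove all three inequalities at once through a single uniform estimate: for every $t\in\{2,3,4\}$ and pairwise distinct pairs $f_1,\ldots,f_t\in F$, I will show that $\Exp[X_{f_1}\cdots X_{f_t}]\le \frac{1}{n(n-1)\cdots(n-t+1)}$, which is exactly the three displayed bounds for $t=2,3,4$. Writing $f_k=(i_k,j_k)$, the starting point is that each $X_{f_k}$ is by definition the indicator of the event $\pi(i_k)=j_k$, so the product $X_{f_1}\cdots X_{f_t}$ is the indicator of the intersection $\{\pi(i_1)=j_1,\ldots,\pi(i_t)=j_t\}$. Hence $\Exp[X_{f_1}\cdots X_{f_t}]=\Pb\bigl(\pi(i_k)=j_k \text{ for all } k\bigr)$, and it remains to estimate this probability.

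The heart of the argument is a short case analysis on whether the $t$ prescribed constraints can be simultaneously satisfied by a permutation. Since $\pi$ is a function, two constraints with $i_k=i_\ell$ but $j_k\neq j_\ell$ are contradictory; since $\pi$ is injective, two constraints with $j_k=j_\ell$ but $i_k\neq i_\ell$ are likewise contradictory. (The remaining coincidence $i_k=i_\ell$ and $j_k=j_\ell$ is excluded because the pairs $f_1,\ldots,f_t$ are distinct.) Therefore, if the first coordinates $i_1,\ldots,i_t$ fail to be pairwise distinct, or the second coordinates $j_1,\ldots,j_t$ fail to be pairwise distinct, the event is empty and its probability is $0$, which trivially respects the bound.

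In the complementary case the $i_k$ are pairwise distinct and the $j_k$ are pairwise distinct, so the constraints describe a partial bijection on $t$ domain points; counting the permutations that extend it gives exactly $(n-t)!$ out of $n!$, whence the probability equals $\frac{(n-t)!}{n!}=\frac{1}{n(n-1)\cdots(n-t+1)}$ and the bound holds with equality. (When $n<t$ this consistent case cannot arise, as there are not enough distinct values available, so only the zero case occurs.) Specializing $t=2,3,4$ and combining the two cases yields the three inequalities.

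I do not anticipate a genuine obstacle: the statement is essentially the elementary count of permutations respecting a prescribed partial matching. The only point requiring care is the degenerate configurations with a repeated first or second coordinate, which is precisely where injectivity of $\pi$ enters and where the inequality can be strict rather than an equality.
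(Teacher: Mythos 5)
Your proof is correct and takes essentially the same route as the paper's: a case split on repeated first or second coordinates (where the functionality and injectivity of $\pi$ force the product of indicators to vanish), followed by the exact count $(n-t)!/n!$ in the consistent case. The only difference is cosmetic --- the paper writes out the case $t=2$ and declares the remaining two inequalities analogous, whereas you handle all $t\in\{2,3,4\}$ uniformly.
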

\begin{proof}
Let us focus on the first inequality. Write $e=(i,j)$ and $f=(i',j')$. 
Observe that if $i=i'$ or~$j=j'$, then $X_e$ and $X_f$ cannot simultaneously be equal to $1$ since~$e\neq f$, and hence $X_eX_f=0$ surely. Otherwise, the probability that for $\pi$ chosen uniformly at random we have $\pi(i)=j$ and $\pi(i')=j'$ is $\frac{1}{n(n-1)}$. Consequently $\Pb(X_eX_f=1)=\frac{1}{n(n-1)}$. This implies the first inequality. The proofs of the remaining two inequalities are analogous.
\cqed\end{proof}

\begin{claim}\label{cl:fourth}
It holds that
\[\Exp X^4\leq 15.\]
\end{claim}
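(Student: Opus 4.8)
The plan is to expand $X^4=\bigl(\sum_{f\in F}X_f\bigr)^4$ into monomials and exploit the fact that each $X_f$ is a $\{0,1\}$-valued indicator, so that $X_f^k=X_f$ for every $k\ge 1$. After expanding the fourth power as a sum over ordered tuples $(i_1,i_2,i_3,i_4)\in F^4$, each monomial $X_{i_1}X_{i_2}X_{i_3}X_{i_4}$ therefore collapses to a product of the \emph{distinct} indicators that appear in it. The natural way to organize the count is to group the ordered tuples according to the set partition of the four positions $\{1,2,3,4\}$ that records which positions carry equal pairs.

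First I would record the falling-factorial identity for the indicator sum $X=\sum_{f\in F}X_f$. The $k$-th falling factorial $(X)_k\coloneqq X(X-1)\cdots(X-k+1)$ equals the number of ordered $k$-tuples of \emph{pairwise distinct} pairs $f\in F$ that simultaneously lie in $E(\pi)$; hence $\Exp(X)_k=\sum \Exp[X_{f_1}\cdots X_{f_k}]$, the sum ranging over ordered $k$-tuples of pairwise distinct pairs of $F$. Using the classical identity $X^4=\sum_{k=1}^{4}S(4,k)\,(X)_k$, where the Stirling numbers of the second kind are $S(4,1)=1$, $S(4,2)=7$, $S(4,3)=6$, and $S(4,4)=1$, it suffices to bound each $\Exp(X)_k$ separately.

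The heart of the estimate is the bound $\Exp(X)_k\le 1$ for each $k\in\{1,2,3,4\}$. If $F$ contains fewer than $k$ pairs, then there are no ordered $k$-tuples of distinct pairs and $\Exp(X)_k=0$; this disposes of the degenerate small-$n$ cases. Otherwise $m\ge k$, so $n\ge k$ and the denominators in \cref{cl:pairs} are positive; applying \cref{cl:pairs} to each ordered $k$-tuple of distinct pairs gives $\Exp[X_{f_1}\cdots X_{f_k}]\le\frac{1}{n(n-1)\cdots(n-k+1)}$, while the number of such tuples is $m(m-1)\cdots(m-k+1)\le n(n-1)\cdots(n-k+1)$ because $m\le n$. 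Multiplying the per-tuple bound by the number of tuples yields $\Exp(X)_k\le 1$.

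Combining this with the Stirling identity gives $\Exp X^4\le S(4,1)+S(4,2)+S(4,3)+S(4,4)=1+7+6+1=15$, which is exactly the Bell number $B_4$ and the claimed bound. The only point that requires genuine care --- and thus the closest thing to an obstacle --- is the bookkeeping translating the fourth power into the falling factorials (equivalently, counting ordered tuples by their equality pattern), together with the verification that the terms $(X)_k$ with $m<k$ vanish identically so that \cref{cl:pairs} is never invoked with a nonpositive denominator. Once this is in place, the estimate $m\le n$ and \cref{cl:pairs} close the computation immediately.
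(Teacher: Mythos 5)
Your proof is correct and is essentially the paper's own argument: the paper expands $\bigl(\sum_{e\in F}X_e\bigr)^4$ and groups monomials by their equality pattern, which after passing from unordered to ordered tuples yields exactly your Stirling coefficients ($14/2=7$, $36/6=6$, $24/24=1$), and then applies Assertion~\ref{cl:pairs} together with $m\le n$ in precisely the same way to get $1+7+6+1=15$. Your falling-factorial bookkeeping and explicit treatment of the degenerate case $m<k$ (where the empty sum avoids invoking Assertion~\ref{cl:pairs} with nonpositive denominators) are only cosmetic refinements of the same computation.
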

\begin{proof}
    For each $e\in F$, since~$X_e\in\{0,1\}$ we have $X_e=X_e^2=X_e^3=X_e^4$.
By Assertion~\ref{cl:pairs} and the fact that $m\leq n$, we have
\begin{eqnarray*}
\Exp X^4 & = & \Exp \left(\sum_{e\in F} X_e\right)^4 \\
         & = & \sum_{e\in F} \Exp X_e^4 + \sum_{\{e,f\}\subseteq F} \Exp (4X_e^3X_f+6X_e^2X_f^2+4X_eX_f^3) \\
         & & + \sum_{\{e,f,g\}\subseteq F} \Exp (12X_e^2X_fX_g + 12X_eX_f^2X_g+12X_eX_fX_g^2)\\
         & & + \sum_{\{e,f,g,h\}\subseteq F} \Exp (24X_eX_fX_gX_h)\\
         & = & \sum_{e\in F} \Exp X_e + 14\sum_{\{e,f\}\subseteq F} \Exp X_eX_f \\
         & & + 36\sum_{\{e,f,g\}\subseteq F} \Exp X_eX_fX_g + 24 \sum_{\{e,f,g,h\}\subseteq F} \Exp X_eX_fX_gX_h\\
         & \leq & \frac{m}{n} + 14\cdot \frac{\binom{m}{2}}{n(n-1)} + 36\cdot \frac{\binom{m}{3}}{n(n-1)(n-2)}+24\cdot  \frac{\binom{m}{4}}{n(n-1)(n-2)(n-3)}\\
         & \leq & 1+7+6+1=15.
\end{eqnarray*}
This concludes the proof.
\cqed\end{proof}

We will also use the following well-known anti-concentration inequality.

\begin{theorem}[Paley-Zygmund inequality,~\cite{PaleyZ32}]\label{thm:pz}
Let $Z$ be a non-negative random variable with finite variance and let $\lambda\in[0,1]$. Then
\[\Pb(Z\geq \lambda\Exp Z)\geq (1-\lambda)^2\cdot \frac{(\Exp Z)^2}{\Exp Z^2}.\]
\end{theorem}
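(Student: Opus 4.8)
The plan is to prove the inequality by a direct application of the Cauchy--Schwarz inequality, preceded by a splitting of the expectation $\Exp Z$ according to whether $Z$ falls below or above the threshold $\lambda\Exp Z$. Throughout I write $\mu\coloneqq\Exp Z\geq 0$, and I may assume $\Exp Z^2>0$, since otherwise $Z=0$ almost surely and the statement is vacuous (see the final paragraph).

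First I would decompose the expectation by conditioning on the threshold event:
\[\mu=\Exp Z=\Exp\left[Z\cdot\mathbf{1}_{Z<\lambda\mu}\right]+\Exp\left[Z\cdot\mathbf{1}_{Z\geq\lambda\mu}\right].\]
The first summand is easy to control: on the event $\{Z<\lambda\mu\}$ the integrand satisfies $Z<\lambda\mu$, so that $\Exp\left[Z\cdot\mathbf{1}_{Z<\lambda\mu}\right]\leq\lambda\mu\cdot\Pb(Z<\lambda\mu)\leq\lambda\mu$. Substituting this bound into the decomposition and rearranging yields the key lower bound on the high part,
\[\Exp\left[Z\cdot\mathbf{1}_{Z\geq\lambda\mu}\right]\geq\mu-\lambda\mu=(1-\lambda)\mu.\]

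Next I would apply the Cauchy--Schwarz inequality to the product $Z\cdot\mathbf{1}_{Z\geq\lambda\mu}$, viewing it as the inner product of $Z$ with the indicator function. Using $\mathbf{1}_{Z\geq\lambda\mu}^2=\mathbf{1}_{Z\geq\lambda\mu}$, this gives
\[\Exp\left[Z\cdot\mathbf{1}_{Z\geq\lambda\mu}\right]\leq\sqrt{\Exp Z^2}\cdot\sqrt{\Pb(Z\geq\lambda\mu)}.\]
Combining the two displays produces the chain $(1-\lambda)\mu\leq\sqrt{\Exp Z^2}\cdot\sqrt{\Pb(Z\geq\lambda\mu)}$; squaring it gives $(1-\lambda)^2\mu^2\leq\Exp Z^2\cdot\Pb(Z\geq\lambda\mu)$, and dividing through by $\Exp Z^2>0$ rearranges to exactly the asserted bound $\Pb(Z\geq\lambda\mu)\geq(1-\lambda)^2\cdot\frac{\mu^2}{\Exp Z^2}$.

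There is no genuinely hard step here: the whole argument is a one-line consequence of Cauchy--Schwarz once the correct splitting is in place, and the conceptual content is simply the realization that the high-part expectation should be estimated from above by Cauchy--Schwarz and from below by the trivial bound on the low part. The two points deserving a word of care are, first, that the non-negativity of $Z$ is used crucially in bounding $\Exp[Z\cdot\mathbf{1}_{Z<\lambda\mu}]$, so the hypothesis cannot be dropped; and second, the degenerate case $\Exp Z^2=0$, equivalent to $Z=0$ almost surely, in which the right-hand side is an indeterminate $0/0$. I would dispose of this case at the outset, so that $\Exp Z^2>0$ may be assumed in the main computation.
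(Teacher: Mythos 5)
Your proof is correct: the split of $\Exp Z$ at the threshold $\lambda\mu$, the trivial bound on the low part using non-negativity, and Cauchy--Schwarz applied to $Z\cdot\mathbf{1}_{Z\geq\lambda\mu}$ is the standard argument for the Paley--Zygmund inequality, and your handling of the degenerate case $\Exp Z^2=0$ is appropriate. Note that the paper itself gives no proof of this statement --- it imports the inequality from the literature (Paley and Zygmund, 1932) as a known tool --- so there is no internal argument to compare against; your write-up would serve as a self-contained substitute for the citation.
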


With all the tools prepared, we proceed with the proof of Lemma~\ref{lem:var-pb}.
We use Theorem~\ref{thm:pz} with $\lambda=\frac{1}{2}$ for the random variable
\[Z\coloneqq|X-\Exp X|^2.\]
By Assertion~\ref{cl:fourth} and the fact that $\Exp X\leq 1$, we have
\[\Exp Z^2 = \Exp |X-\Exp X|^4\leq 1+\Exp X^4\leq 16.\]
As $\Exp Z=\Var X\geq \alpha$, from Theorem~\ref{thm:pz} we infer that
\begin{equation}\label{eq:alligator}
\Pb\left(Z\geq \alpha/2\right)\geq
\Pb\left(Z\geq \Exp Z/2\right)\geq 
\frac{1}{4}\cdot \frac{(\Exp Z)^2}{16}\geq 
\frac{1}{4}\cdot \frac{\alpha^2}{16}=\frac{\alpha^2}{64}.
\end{equation}
Observe now that the assumption that $m>\left(1-\sqrt{\alpha/2}\right)\cdot n$ implies that 
\[1-\Exp X =1-\frac{m}{n}<\sqrt{\alpha/2}.\]
This, in turns, implies that the event
\[\left\{|X-\Exp X|^2\geq \alpha/2\right\}\]
is disjoint with the event $\{X=1\}$. By combining this with~\eqref{eq:alligator}, we conclude that
\[\Pb(X\neq 1)\geq \Pb\left(|X-\Exp X|^2\geq \alpha/2\right)=
\Pb\left(Z\geq \alpha/2\right)\geq \frac{\alpha^2}{64}.\]
Since $X$ is a non-negative integer-valued random variable with mean not larger than~$1$, we have 
\[\Pb(X\neq 1)=\Pb(X=0)+\Pb(X\geq 2)\qquad\textrm{and}\qquad \Pb(X=0)\geq \Pb(X\geq 2).\]
By combining the two inequalities above we conclude that
\[\Pb(X=0)\geq \frac{1}{2}\cdot \Pb(X\neq 1)\geq \frac{\alpha^2}{128}.\]
This concludes the proof of Lemma~\ref{lem:var-pb}.

\subsection{Many disjoint pairs give high variance}

Two pairs $(i,j)$ and~$(i',j')$, each in $[n]\times[n]$, are \emph{disjoint} if $i\neq i'$ and $j\neq j'$. We now prove that to ensure that for a pair of tactics $\tau_A,\tau_B$, the variance of $X$ is high, it suffices to show that among pairs in $F$, there is a quadratic number of pairs of pairs that are disjoint.

\begin{lemma}\label{lem:dp-var}
Suppose $\tau_A,\tau_B\in \Tct$ are such that there are at least $\alpha \binom{n}{2}$ disjoint pairs in~$F$, for some positive constant~$\alpha$. Then $\Var X\geq \alpha$.
\end{lemma}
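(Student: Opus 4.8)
The plan is to compute $\Var X$ directly from the definition $X = \sum_{f \in F} X_f$, using the variance decomposition
\[
\Var X = \sum_{f \in F} \Var X_f + \sum_{\substack{e, f \in F \\ e \neq f}} \Cov(X_e, X_f),
\]
and to show that the total contribution is at least $\alpha$. The key structural input is the distinction between disjoint and non-disjoint pairs in $F$, which controls the sign and magnitude of the covariance terms.

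First I would handle the diagonal terms. Since each $X_f$ is an indicator with $\Exp X_f = \frac 1n$, we have $\Var X_f = \frac 1n(1 - \frac 1n) = \frac{n-1}{n^2}$, so $\sum_{f \in F} \Var X_f = m \cdot \frac{n-1}{n^2} \geq 0$; this is a nonnegative contribution that I can afford to discard or keep as slack. Next I would compute the covariances, splitting the ordered pairs $(e,f)$ with $e \neq f$ into two cases according to whether $e$ and $f$ are disjoint. Write $e = (i,j)$ and $f = (i', j')$. If $e$ and $f$ are \emph{not} disjoint (they share a coordinate, $i = i'$ or $j = j'$), then as noted in the proof of Assertion~\ref{cl:pairs} we have $X_e X_f = 0$ surely, so $\Exp X_e X_f = 0$ and hence $\Cov(X_e, X_f) = 0 - \frac{1}{n^2} = -\frac{1}{n^2} < 0$. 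If $e$ and $f$ \emph{are} disjoint, then $\Exp X_e X_f = \Pb(\pi(i) = j, \pi(i') = j') = \frac{1}{n(n-1)}$, so
\[
\Cov(X_e, X_f) = \frac{1}{n(n-1)} - \frac{1}{n^2} = \frac{n - (n-1)}{n^2(n-1)} = \frac{1}{n^2(n-1)} > 0.
\]
The main obstacle is therefore that the non-disjoint pairs contribute negatively, and I must argue that the positive contribution from the quadratically many disjoint pairs dominates. The crucial quantitative win is that each disjoint pair contributes a covariance of order $\frac{1}{n^3}$ while the hypothesis guarantees $\Theta(n^2)$ of them, yielding a contribution of order $\alpha/n$ per pair times $\binom n2$, i.e.\ a constant; meanwhile the negative terms are bounded.

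To finish, I would assemble the bound. There are at least $\alpha \binom n2$ \emph{unordered} disjoint pairs, hence at least $2\alpha\binom n2 = \alpha n(n-1)$ ordered ones, each contributing $\frac{1}{n^2(n-1)}$; so the positive covariance contribution is at least $\alpha n(n-1) \cdot \frac{1}{n^2(n-1)} = \frac{\alpha}{n} \cdot n = \alpha$. Wait---I would double-check this normalization carefully, since the clean cancellation $\alpha n(n-1)\cdot\frac{1}{n^2(n-1)} = \frac{\alpha}{n}$ gives only $\frac\alpha n$, not $\alpha$; this suggests that either the positive covariance is larger, or the diagonal and negative-covariance terms must be retained rather than discarded. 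In fact, reorganizing via $\Var X = \Exp X^2 - (\Exp X)^2$ is cleaner: one has $\Exp X^2 = \Exp X + \sum_{e \neq f} \Exp X_e X_f = \frac mn + (\text{number of ordered disjoint pairs}) \cdot \frac{1}{n(n-1)}$, and $(\Exp X)^2 = \frac{m^2}{n^2}$, after which the disjoint-pair count enters with the favorable weight $\frac{1}{n(n-1)}$ rather than $\frac{1}{n^2(n-1)}$. With at least $\alpha n(n-1)$ ordered disjoint pairs this yields a contribution of at least $\alpha$ from that term alone, and I would then verify that the remaining terms $\frac mn - \frac{m^2}{n^2} = \frac mn(1 - \frac mn) \geq 0$ are nonnegative (using $m \leq n$), completing the bound $\Var X \geq \alpha$. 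The careful bookkeeping of ordered versus unordered pairs and the exact covariance weight is where I expect the real care to be needed.
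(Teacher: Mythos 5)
Your final argument is correct and takes essentially the same route as the paper: both proofs rest on the exact values $\Exp X_eX_f=\frac{1}{n(n-1)}$ for disjoint pairs and $0$ otherwise, let the disjoint-pair count enter with the full weight $\frac{1}{n(n-1)}$ (giving the contribution $\geq\alpha$), and absorb the remaining terms using $m\leq n$ via $\frac{m}{n}-\frac{m^2}{n^2}=\frac{m(n-m)}{n^2}\geq 0$, which is exactly the paper's expression $\frac{m(n-1)-m(m-1)}{n^2}+\alpha$ written after grouping by $\sum_e\Var X_e+2\sum_{\{e,f\}}\Cov(X_e,X_f)$ instead of by $\Exp X^2-(\Exp X)^2$. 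Your initial false start (using only the net covariance $\frac{1}{n^2(n-1)}$ per disjoint pair, which indeed yields only $\alpha/n$) is correctly caught and repaired within the proposal, so the argument as it stands is sound.
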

\begin{proof}
As in the proof of Assertion~\ref{cl:pairs}, we observe that for every pair of different elements $e,f\in F$, we have
\[\Exp X_eX_f=\begin{cases}\frac{1}{n(n-1)} & \text{if $e$ and $f$ are disjoint;}\\
0 & \text{otherwise.}\end{cases}
    \]
Therefore, for all different $e,f\in F$ we have
\begin{eqnarray*}
\Var X_e & = & \Exp X_e^2 - (\Exp X_e)^2 = \frac{n-1}{n^2},\quad\text{and}\\
\Cov (X_e,X_f) & = & \Exp X_eX_f - \Exp X_e\Exp X_f = [e\cap f=\emptyset]\cdot\frac{1}{n(n-1)}-\frac{1}{n^2},
\end{eqnarray*}
where the expression $[e\cap f=\emptyset]$ takes value $1$ if $e$ and $f$ are disjoint, and $0$ otherwise. Consequently,
\begin{eqnarray*}
\Var X & = & \sum_{e\in F} \Var X_e + 2\cdot \sum_{\{e,f\}\subseteq F}\Cov(X_e,X_f)\\
 & \geq & m\cdot \frac{n-1}{n^2}-2\cdot \binom{m}{2}\cdot \frac{1}{n^2} + 2\cdot \alpha\binom{n}{2}\cdot \frac{1}{n(n-1)} \\
 & = & \frac{m(n-1)-m(m-1)}{n^2}+\alpha,
\end{eqnarray*}
which is at least~$\alpha$ because~$m\leq n$.  This concludes the proof.
\end{proof}

\subsection{Finding many disjoint pairs}

Finally, we prove that if $\tau_A$ and $\tau_B$ are two tactics of the same kind, then the set of pairs
$F$ defined for $\tau_A$ and $\tau_B$ contains many pairs of disjoint pairs. For this, it will be
convenient to interpret $F$ as the edge set of a bipartite graph, with each side of the bipartition
consisting of a copy of the set $[n]$. In this view, a pair of disjoint pairs corresponds to a pair of
disjoint edges: two edges in a graph being \emph{disjoint} if all the four endpoints of these edges are
pairwise different. 

We first prove the following graph-theoretic lemma. The degree~$\deg(u)$ of a vertex~$u$ in a graph~$G$
is the number of edges of~$G$ incident to~$u$.

\begin{lemma}\label{lem:many-disjoint}
Let $G=(A,B,E)$ be a bipartite graph such that $A$ and $B$ --- the sides of the bipartition --- have size
    $n$ each, $\frac{11}{12}n\leq |E|\leq n$, and the degree of each vertex in $G$ is at most
    $\frac{2}{3}n$.  Then there are two disjoint subsets of edges $E_1,E_2\subseteq E$, each of size at
    least $n/8$, such that every edge from $E_1$ is disjoint with every edge in $E_2$.
\end{lemma}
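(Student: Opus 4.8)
The plan is to reformulate the conclusion purely in terms of vertex sets. A pair of cross-disjoint edge sets $E_1,E_2$ is exactly a choice of disjoint $A_1,A_2\subseteq A$ and disjoint $B_1,B_2\subseteq B$ with $E_i$ equal to the edges between $A_i$ and $B_i$: indeed, two edges are disjoint precisely when they share neither an $A$-endpoint nor a $B$-endpoint. So it suffices to $2$-colour $A$ and $B$ (assigning each vertex a group in $\{1,2\}$) so that both ``monochromatic'' edge classes have at least $n/8$ edges. First I would dispose of the easy regime through a maximum matching $M$, of size $\nu$. Two disjoint halves of a matching are automatically cross-disjoint, since any two distinct matching edges share no endpoint; hence if $\nu\geq n/4$, splitting $M$ into two sub-matchings of size at least $n/8$ already yields $E_1$ and $E_2$. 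This case needs neither the edge bound nor the degree bound.

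The interesting regime is the \emph{concentrated} one, where the edges pile up on few vertices (equivalently, by K\"onig's theorem, the minimum vertex cover $C$ satisfies $|C|=\nu<n/4$). Here I would build $E_1$ from a star: let $v$ be a vertex of large degree, say $v\in A$ with neighbourhood $S=N(v)\subseteq B$, and plan to take $E_1=\{v\}\times S_1$ for a carefully chosen $S_1\subseteq S$ with $|S_1|=\lceil n/8\rceil$. The set $E_2$ must then avoid $v$ and $S_1$ entirely, so it may use only edges whose $B$-endpoint lies outside $S_1$. The crucial point is the \emph{choice} of $S_1$: among the $|S|\leq \frac{2}{3}n$ leaves of $v$, I would pick the $\lceil n/8\rceil$ leaves incident to the fewest \emph{non-star} edges. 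Since the total number of non-star edges is at most $|E|-\deg(v)\leq n$ and they are spread over at least $\frac{2}{3}n$ candidate leaves, an averaging argument shows the chosen $S_1$ blocks only a small fraction of them, leaving at least $n/8$ edges genuinely disjoint from $E_1$; any $n/8$ of those form $E_2$. This is where both hypotheses $|E|\leq n$ and $\deg\leq \frac{2}{3}n$ enter quantitatively, and where the constants $\frac{11}{12}$ and $\frac{2}{3}$ are calibrated.

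The main obstacle is twofold, and concerns bridging the two regimes and controlling this ``blocking'' of edges. If instead one tried a single uniform argument by colouring every vertex independently at random, then $\Exp|E_i|=|E|/4\geq \frac{11}{48}n$ sits comfortably above $n/8$, but the variance of $|E_i|$ equals (up to lower-order terms) a constant times $\sum_u\binom{\deg(u)}{2}$, which is $\Theta(n^2)$ as soon as one vertex has degree $\Theta(n)$. Thus the second-moment (Chebyshev) bound forces both classes above $n/8$ with probability exceeding $\frac12$ only when the maximum degree is below roughly $0.04\,n$; a high-degree vertex destroys concentration. So the random colouring resolves only the genuinely \emph{spread} sub-case, and the star-plus-averaging construction must cover the concentrated sub-case, including the awkward middle range where the largest degree is too big for concentration yet a single star does not already contain $n/8$ edges. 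I expect the hardest part to be exactly this middle range: there one must assemble $E_1$ from a union of a few stars at cover vertices on one side and then re-run the averaging argument to guarantee that the union of their neighbourhoods still leaves $\geq n/8$ edges free for $E_2$, keeping the arithmetic consistent with the stated constants.
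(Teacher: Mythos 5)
Your first case is essentially sound: if the maximum matching $M$ has size $\nu\geq n/4$, splitting it into two halves gives cross-disjoint sets (modulo a harmless integrality fix --- you really need $\nu\geq 2\lceil n/8\rceil$, since e.g.\ for $n=12$, $\nu=3$ one half has only one edge). The genuine gap is that your two remaining tools do not cover all graphs with $\nu<n/4$, and you concede as much. The single-star argument needs a vertex $v$ with $\deg(v)\geq\lceil n/8\rceil$, and in fact substantially more: your averaging guarantees only about $\left(|E|-\deg(v)\right)\cdot\left(1-\frac{n/8}{\deg(v)}\right)$ edges disjoint from $E_1$, and this collapses to $0$ as $\deg(v)$ approaches $n/8$ (it stays above $n/8$ only for $\deg(v)$ above roughly $0.15\,n$). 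But $\nu<n/4$ does not force any vertex of large degree: K\"onig only gives $|E|\leq \nu\Delta$, hence $\Delta>\frac{11}{3}$. Concretely, let $G$ be the disjoint union of ten stars $K_{1,n/10}$ (centres in $A$): then $|E|=n$, all degrees are at most $n/10\leq\frac23 n$, $\nu=10<n/4$, and $\Delta=n/10<\lceil n/8\rceil$, so the matching case, the single-star case, and (by your own variance computation, since $\Delta>0.04\,n$) the random colouring all fail. This graph is of course easy to handle directly, but it certifies that your ``middle range'' is non-empty, and your proposal explicitly leaves that range unproven --- so the proof is incomplete.

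The repair you sketch --- take $E_1$ to be a union of several stars at cover vertices and re-run the averaging --- faces an obstacle you do not address: stars at distinct cover vertices can share leaves, and then edges of the remaining stars are \emph{not} disjoint from $E_1$. The extreme case is $K_{10,n/10}$ (ten vertices of $A$ all joined to the same $n/10$ vertices of $B$), which satisfies all hypotheses with $\nu=10$ and $\Delta=n/10$: there, any union of stars as $E_1$ touches $B$-vertices that meet \emph{every} other edge, so no nonempty $E_2$ survives unless you also split the common neighbourhood. Once you are forced to split vertex sets on \emph{both} sides, you have arrived at the paper's (shorter and uniform) argument, which avoids all case distinctions: order each side by degree and greedily split $A=A_1\uplus A_2$ and $B=B_1\uplus B_2$ so that all four degree sums $\deg(A_s),\deg(B_t)$ are at least $n/4$ (this is exactly where $\deg\leq\frac23 n$ and $|E|\geq\frac{11}{12}n$ are used); then, writing $m_{st}$ for the number of edges between $A_s$ and $B_t$, each of the four sums $m_{11}+m_{12}$, $m_{21}+m_{22}$, $m_{11}+m_{21}$, $m_{12}+m_{22}$ is at least $n/4$, which forces either $m_{11},m_{22}\geq n/8$ or $m_{12},m_{21}\geq n/8$, since otherwise some $m_{st}+m_{s't'}<n/4$ with $st\in\{11,22\}$, $s't'\in\{12,21\}$ would contradict one of these inequalities. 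I recommend replacing the case split on $\nu$ and $\Delta$ by this partition argument.
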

\begin{proof}
For $X\subseteq A\cup B$, we let~$\deg(X)\coloneqq\sum_{u\in X}\deg(u)$.

Let $a_1,\ldots,a_n$ be the vertices of $A$ in non-increasing order with respect to their degrees. Let
    $t\in \{0,1,\ldots,n\}$ be the largest index such that $A_1\coloneqq\{a_1,\ldots,a_t\}$ satisfies
    $\deg(A_1)\leq \frac{2}{3}n$. Since the degree of every vertex is at most $\frac{2}{3}n$ and
    $|E|>\frac{2}{3}n$, we know that neither~$A_1$ nor~$A_2\coloneqq A\setminus A_1$ is empty. In other
    words,~$t\in\{1,\dotsc,n-1\}$.  Further, since $\deg(A_1)\leq \frac{2}{3}n$, $\deg(A_1\cup
    \{a_{t+1}\})>\frac{2}{3}n$, and $\deg(a_{t+1})\leq \deg(v)$ for every $v\in A_1$, it follows that
    $\deg(A_1)> n/3$. Since $\deg(A_1)\leq \frac{2}{3}n$, and $\deg(A)=|E|\geq \frac{11}{12}n$, we also
    have $\deg(A_2)\geq n/4$. We conclude that we have found a partition $A_1\uplus A_2$ of $A$ such that
\[\deg(A_1)\geq n/4\qquad \textrm{and}\qquad \deg(A_2)\geq n/4.\]
    Symmetrically, we can find a partition $B_1\uplus B_2$ of $B$ such that
\[\deg(B_1)\geq n/4\qquad \textrm{and}\qquad \deg(B_2)\geq n/4.\]

For all~$s,t\in\{1,2\}$, let $F_{st}$ be the set of all edges from $E$ with one endpoint in $A_s$ and the
    other in $B_t$, and set~$m_{st}\coloneqq|F_{st}|$.
    The above lower bounds on the degrees of $A_1,A_2,B_1,B_2$ imply that
\begin{equation}\label{eq:octopus}
m_{11}+m_{12}\geq n/4,\qquad m_{21}+m_{22}\geq n/4,\qquad m_{11}+m_{21}\geq n/4,\qquad m_{12}+m_{22}\geq n/4.
\end{equation}
Observe that if $m_{11}\geq n/8$ and $m_{22}\geq n/8$, then $E_1=F_{11}$ and $E_2=F_{22}$ satisfy the
    condition from the lemma statement.  Similarly, if $m_{12}\geq n/8$ and $m_{21}\geq n/8$, then taking
    $E_1=F_{12}$ and $E_2=F_{21}$ concludes the proof.  We are thus left with the case when there is
    $st\in \{11,22\}$ such that $m_{st}<n/8$ and there is $s't'\in \{12,21\}$ such that $m_{s't'}<n/8$.
    But then~$m_{st}+m_{s't'}<n/4$, which contradicts one of the inequalities~\eqref{eq:octopus}.
\end{proof}

From Lemma~\ref{lem:many-disjoint} we immediately infer the following result.

\begin{lemma}\label{lem:same-sex-variance}
Suppose that $\tau_A,\tau_B\in \Tct$ is a pair of tactics such that $\tau_A,\tau_B\in \Tctf$ or $\tau_A,\tau_B\in \Tctm$, and that $|F|\geq \frac{11}{12}n$. 
Then $\Var X\geq \frac{1}{32}$.
\end{lemma}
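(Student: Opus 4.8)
The plan is to read $F$ as the edge set of a bipartite graph and feed it to Lemma~\ref{lem:many-disjoint}, then turn the two disjoint edge-sets it returns into a quadratic collection of disjoint pairs on which Lemma~\ref{lem:dp-var} can be invoked. Concretely, I would let $A$ and $B$ be two copies of $[n]$ and take $G=(A,B,F)$ to be the bipartite graph whose edges are the pairs $(\tau_A(i),\tau_B(i))$ for $i\in[n]$; its edge set is exactly $F$, so by the hypothesis $|F|\geq\frac{11}{12}n$ together with the trivial bound $m\leq n$ we get $\frac{11}{12}n\leq|F|\leq n$, matching the size requirement of Lemma~\ref{lem:many-disjoint}. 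The remaining --- and only substantial --- hypothesis to verify is that every vertex of $G$ has degree at most $\frac{2}{3}n$, and this is exactly where the same-kind assumption enters.

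For the degree bound I would record two estimates for the degree of a vertex $a\in A$. Since $\deg(a)=|\{\tau_B(i)\colon \tau_A(i)=a\}|$, it is at most $|\tau_B([n])|$ (every second coordinate lies in the image of $\tau_B$) and also at most $|\tau_A^{-1}(a)|$ (only that many indices map to $a$). In the passive case I use the first estimate: since $\tau_B\in\Tctf$ we get $\deg(a)\leq|\tau_B([n])|\leq n/2\leq\frac{2}{3}n$, and symmetrically every $b\in B$ has $\deg(b)\leq|\tau_A([n])|\leq n/2$. In the active case I use the second estimate: since $\tau_A\in\Tctm$ satisfies $|\tau_A([n])|>n/2$, its nonempty fibers partition $[n]$ into at least $\lfloor n/2\rfloor+1$ parts, so the largest fiber --- and hence $\deg(a)$ --- has size at most $n-\lfloor n/2\rfloor=\lceil n/2\rceil\leq\frac{2}{3}n$ (immediate for $n\geq 3$, and checked by hand for $n=2$); the bound for $b\in B$ is symmetric using $\tau_B\in\Tctm$. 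Either way, all degrees are at most $\frac{2}{3}n$.

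With all hypotheses in place, Lemma~\ref{lem:many-disjoint} supplies disjoint edge sets $E_1,E_2\subseteq F$, each of size at least $n/8$, every edge of $E_1$ being disjoint from every edge of $E_2$. Each $\{e,f\}$ with $e\in E_1$ and $f\in E_2$ is then a disjoint pair in $F$, and as $E_1\cap E_2=\emptyset$ these unordered pairs are distinct, so $F$ has at least $|E_1|\cdot|E_2|\geq n^2/64\geq\frac{1}{32}\binom{n}{2}$ disjoint pairs. Lemma~\ref{lem:dp-var} with $\alpha=\frac{1}{32}$ then yields $\Var X\geq\frac{1}{32}$, as required. I expect the only real obstacle to be the degree bound of the middle paragraph: one must match the correct degree estimate to each case, and the fiber-counting argument in the active case is the single point at which the same-kind hypothesis (the images being either both $\leq n/2$ or both $>n/2$) is genuinely used; the rest is bookkeeping with the two cited lemmas.
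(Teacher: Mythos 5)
Your proposal is correct and follows essentially the same route as the paper: the same bipartite graph $G=(A,B,F)$, the same verification of the degree hypothesis (image-size bound in the passive case, a pigeonhole/fiber-counting bound of roughly $\lceil n/2\rceil$ in the active case), and the same combination of Lemma~\ref{lem:many-disjoint} with Lemma~\ref{lem:dp-var} via the count $|E_1|\cdot|E_2|\geq n^2/64\geq\frac{1}{32}\binom{n}{2}$. If anything, your explicit check of $n=2$ in the active case is slightly more careful than the paper's passing claim that $\frac{n+1}{2}\leq\frac{2}{3}n$, which literally fails for $n=2$ even though the actual degrees there are trivially small.
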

\begin{proof}
Let $G=(A,B,F)$ be the bipartite graph constructed by taking $A$ and $B$ to be two disjoint copies of the set $[n]$, and interpreting each pair $(i,j)\in F$
as an edge that connects the copy of $i$ in~$A$ with the copy of~$j$ in~$B$.
We now verify that $G$ satisfies the prerequisites of Lemma~\ref{lem:many-disjoint}. We have $|F|\geq \frac{11}{12}n$ by assumption, so we are left with checking the requirements on degrees.

Suppose first $\tau_A,\tau_B\in \Tctf$. Then $|\tau_A([n])|\leq n/2$, so there are only at most $n/2$ indices $i\in [n]$ that may be the first coordinates of pairs from~$F$. Hence in $G$, the degree of every vertex in $B$ is at most $n/2$. A symmetric reasoning shows that the degree of every vertex in $A$ is at most $n/2$.

Suppose now that $\tau_A,\tau_B\in \Tctm$. Then $|\tau_A([n])|>n/2$, hence there are at least $\frac{n+1}{2}$ indices $i\in [n]$ that are the first coordinates of pairs from~$F$.
Every $i\in [n]$ is the first coordinate of at most $\frac{n+1}{2}$ pairs from~$F$. Indeed, otherwise it
    would not be possible that each of the at least $\frac{n-1}{2}$ indices $i'\in \tau_A([n])\setminus
    \{i\}$ would be the first coordinate of one of the remaining less than $\frac{n-1}{2}$ pairs from
    $F$. This means that in $G$, the degree of each vertex from $A$ is at most~$\frac{n+1}{2}\leq
    \frac{2}{3}n$. A symmetric reasoning shows that the degree of each vertex from $B$ is at
    most~$\frac{2}{3}n$.

Having verified the prerequisites of Lemma~\ref{lem:many-disjoint}, we can conclude that there exist disjoint subsets of pairs~$F_1,F_2\subseteq F$, each of size at least $n/8$, such that every pair from $F_1$ is disjoint with every pair from~$F_2$. This implies that in $F$ there are at least $\frac{n^2}{64}\geq \frac{1}{32}\cdot \binom{n}{2}$ pairs of pairs that are disjoint. By Lemma~\ref{lem:dp-var}, this implies that $\Var X\geq \frac{1}{32}$.
\end{proof}

\subsection{Wrapping up the proof}

With all the tools prepared, we are now in a position to prove Lemma~\ref{lem:same-sex}.

\begin{proof}[of Lemma~\ref{lem:same-sex}]
Let $F$, $m$, and $X$ be defined for $\tau_A,\tau_B$ as in Section~\ref{sec:var}.

We first consider the corner case when $m\leq \frac{11}{12}n$. Then
\[\Exp X = \frac{m}{n}\leq \frac{11}{12}.\]
Therefore, by Markov's inequality we infer that
\[\Pb(X=0)=1-\Pb(X\geq 1)\geq 1-\frac{11}{12}=\frac{1}{12}.\]

Now consider the case when $m>\frac{11}{12}n$. 
By Lemma~\ref{lem:same-sex-variance} we infer that $\Var X\geq \frac{1}{32}$.
Applying Lemma~\ref{lem:var-pb} for $\alpha=\frac{1}{32}$, we conclude that in this case
\[\Pb(X=0)\geq \frac{1}{128\cdot 32^2}=2^{-17}.\]
Note here that the assumption $m\geq \left(1-\sqrt{\alpha/2}\right)\cdot n$ is satisfied, because $1-\sqrt{\alpha/2}=\frac{7}{8}<\frac{11}{12}$.

Hence, we have $\Pb(X=0)\geq 2^{-17}$ in both cases. By Lemma~\ref{lem:pb-waiting} we now conclude that
\[\wt{\tau_A}{\tau_B}\geq \frac{n+1}{2}+2^{-35}\cdot n.\]
Hence, Lemma~\ref{lem:same-sex} holds for $\delta=2^{-35}$.
\end{proof}

Recalling that the proof of Theorem~\ref{thm:main} sets~$\eps$ to be $\delta/2$, we conclude that Theorem~\ref{thm:main} holds for~$\eps=2^{-36}$.

\bigskip
\noindent
\textbf{Acknowledgements.} The last author thanks Amos Korman for stimulating discussions on the
rendezvous problem in general.

\begin{bibdiv} 
\begin{biblist} 
\bib{Alpern95}{article}{
   author={Alpern, Steve},
   title={The rendezvous search problem},
   journal={SIAM J. Control Optim.},
   volume={33},
   date={1995},
   number={3},
   pages={673--683},
   issn={0363-0129},
   review={\MR{1327232}},
   doi={10.1137/S0363012993249195},
}

\bib{Alpern02}{article}{
   author={Alpern, Steve},
   title={Rendezvous search: a personal perspective},
   journal={Oper. Res.},
   volume={50},
   date={2002},
   number={5},
   pages={772--795},
   issn={0030-364X},
   review={\MR{1923700}},
   doi={10.1287/opre.50.5.772.363},
}

\bib{Alpern02b}{article}{
   author={Alpern, Steve},
   title={Rendezvous search on labeled networks},
   journal={Naval Res. Logist.},
   volume={49},
   date={2002},
   number={3},
   pages={256--274},
   issn={0894-069X},
   review={\MR{1885643}},
   doi={10.1002/nav.10011},
}

\bib{Alpern13}{article}{
   author={Alpern, Steve},
   title={Ten open problems in rendezvous search},
   conference={
      title={Search theory},
   },
   book={
      publisher={Springer, New York},
   },
   date={2013},
   pages={223--230},
   review={\MR{3087871}},
   doi={10.1007/978-1-4614-6825-7\_14},
}

\bib{AlpernL02}{article}{
   author={Alpern, Steve},
   author={Lim, Wei Shi},
   title={Rendezvous of three agents on the line},
   journal={Naval Res. Logist.},
   volume={49},
   date={2002},
   number={3},
   pages={244--255},
   issn={0894-069X},
   review={\MR{1885642}},
   doi={10.1002/nav.10005},
}

\bib{AndersonW90}{article}{
   author={Anderson, E. J.},
   author={Weber, R. R.},
   title={The rendezvous problem on discrete locations},
   journal={J. Appl. Probab.},
   volume={27},
   date={1990},
   number={4},
   pages={839--851},
   issn={0021-9002},
   review={\MR{1077533}},
   doi={10.2307/3214827},
}

\bib{PaleyZ32}{article}{
    author = {Paley, R. E. A. C},
    author = {Zygmund, A.}
    title = {On some series of functions, (3)},
    journal = {Math. Proc. Cambridge Philos. Soc.}, 
    volume = {28},
    number = {2},
    pages = {190--205},
    year = {1932},
    doi = {10.1017/S0305004100010860},
}

\bib{Weber09}{report}{
    title={The {A}nderson-{W}eber strategy is not optimal for symmetric rendezvous search on {$K_4$}},
    author={Weber, Richard},
    eprint={0912.0670},
    url = {https://arxiv.org/abs/0912.0670},
    status = {unpublished},
    year = {2009},
} 

\bib{Weber12}{article}{
   author={Weber, Richard},
   title={Optimal symmetric Rendezvous search on three locations},
   journal={Math. Oper. Res.},
   volume={37},
   date={2012},
   number={1},
   pages={111--122},
   issn={0364-765X},
   review={\MR{2891149}},
   doi={10.1287/moor.1110.0528},
}

\bib{Weber12b}{article}{
   author={Weber, Richard},
   title={Strategy for quickest second meeting of two agents in two
   locations},
   journal={Math. Oper. Res.},
   volume={37},
   date={2012},
   number={1},
   pages={123--128},
   issn={0364-765X},
   review={\MR{2891150}},
   doi={10.1287/moor.1110.0529},
}
\end{biblist} 
\end{bibdiv} 
\end{document}